\newtheorem{theorem}{Theorem}[section]
\newtheorem{definition}{Definition}
\newcommand\sfparagraph{\@startsection{subparagraph}{5}{0em}%
{-.5\baselineskip \@plus -2\p@ \@minus -.2\p@}%
{-3.5\p@}%
{\sffamily \bfseries}}
\newcommand\takeaway{\@startsection{subparagraph}{5}{0em}%
{-.5\baselineskip \@plus -2\p@ \@minus -.2\p@}%
{-3.5\p@}%
{\bfseries}}
  \providecommand\BibTeX{{%
    \normalfont B\kern-0.5em{\scshape i\kern-0.25em b}\kern-0.8em\TeX}}}
\begin{document}

\title{A Highly Scalable LLM Clusters with Optical Interconnect}

\author{Xinchi Han}
\affiliation{%
  \institution{Shanghai Jiao Tong University}
  \city{Shanghai}
  \country{China}
}
\email{hanxinchi@sjtu.edu.cn}

\author{Yongxi Lv}
\affiliation{%
  \institution{Shanghai Jiao Tong University}
  \city{Shanghai}
  \country{China}
}
\email{shjdblgklyx2435@sjtu.edu.cn}

\author{Shuyuan Zhang}
\affiliation{%
  \institution{Shanghai Jiao Tong University}
  \city{Shanghai}
  \country{China}
}
\email{zhang-shuyuan@sjtu.edu.cn}

\author{Yingming Mao}
\affiliation{%
  \institution{Xian Jiao Tong University}
  \city{Xian}
  \country{China}
}
\email{mao1234@stu.xjtu.edu.cn}

\author{Weihao Jiang Mao}
\affiliation{%
  \institution{Shanghai Jiao Tong University}
  \city{Shanghai}
  \country{China}
}
\email{weihao.jiang@sjtu.edu.cn}

\author{ZhuoRan Liu}
\affiliation{%
  \institution{Shanghai Jiao Tong University}
  \city{Shanghai}
  \country{China}
}
\email{cocopromenade-9@sjtu.edu.cn}

\author{Zhuotao Liu}
\affiliation{%
  \institution{Tsinghua University}
  \city{Beijing}
  \country{China}
}
\email{zhuotaoliu@tsinghua.edu.cn}

\author{Peirui Cao}
\affiliation{%
  \institution{Nanjing University}
  \city{Nanjing}
  \country{China}
}
\email{caopeirui@nju.edu.cn}

\author{Ximeng Liu}
\affiliation{%
  \institution{Shanghai Jiao Tong University}
  \city{Shanghai}
  \country{China}
}
\email{liuximeng@sjtu.edu.cn}

\author{Xinbing Wang}
\affiliation{%
  \institution{Shanghai Jiao Tong University}
  \city{Shanghai}
  \country{China}
}
\email{xwang8@sjtu.edu.cn}

\author{Shizhen Zhao}
\affiliation{%
  \institution{Shanghai Jiao Tong University}
  \city{Shanghai}
  \country{China}
}
\email{shizhenzhao@sjtu.edu.cn}






\begin{abstract}
\begin{sloppypar}


Recent years have witnessed the adoption of optical circuit switch (OCS) technology. How to design the physical topology, defined by the physical wiring between electrical switching equipments and the OCS, is fundamental to designing efficient OCS-based clusters. We identify three features to evaluate the quality of a physical topology design: logical topology compatibility, cluster scalability, and topology engineering polynomial-solvability. However, none of existing physical topologies has achieved these three features simultaneously. This paper explores designing an optimal physical topology that simultaneously maximizes all. We begin by analyzing the importance of these features in OCS-based cluster and examine the limitations of current designs. Leveraging a proposed \emph{Symmetric Integer Matrix Decomposition Theorem}, we outline a general approach for designing optimal physical topologies and introduce \textbf{Cross Wiring} as a concrete instantiation. The feasibility and advantages of Cross Wiring are verified through a 128-NPU testbed and large-scale real-trace-based simulations.

\end{sloppypar}

\end{abstract}

\settopmatter{printfolios=true}
\maketitle

\section{Introduction}
\begin{sloppypar}


In recent years, the industry has begun replacing electrical packet switches (EPS) by optical circuit switches (OCS), for cluster designs with low network deployment cost, smooth cross-generational upgrade, flexible network reconfiguration, and low power consumption~\cite{poutievski2022jupiter,2023TPU,zhao2019minimal,poutievski2022jupiter,9651977,0On}. 
Given the dynamics provided by OCS, topology designing for OCS-based clusters are separated into two critical concepts: 1) the \emph{Physical Topology}, which describes how EPSes or hosts are physically interconnected with OCSes. 2) the \emph{Logical Topology}, which specifies the required number of inter-EPS links determined by traffic demand. Physical topologies are designed and almost fixed when building the cluster, while logical topologies can vary in a highly frequent manner to provide links tailored for the traffic, by reconfiguring the OCS. This OCS reconfiguration problem is known as \emph{Topology Engineering} (ToE)~\cite{zhao2019minimal,poutievski2022jupiter,9651977,0On}.


\begin{figure*}
\centering
    \begin{subfigure}[b]{0.26\linewidth}
     \includegraphics[scale=0.6]{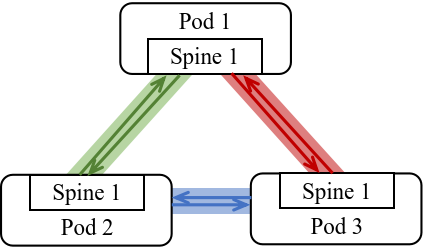}
     \caption{A Logical Topology where 3 Pods with full mesh interconnect may be unrealizable under \emph{Uniform}.}
     \label{fig:l2_logical_topology}
     \end{subfigure}
     \begin{subfigure}[b]{0.335\linewidth}
     \includegraphics[scale=0.62]{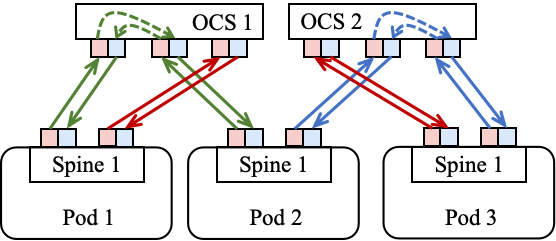}
     \caption{The unallocated ports of Pod1 and Pod3 connect to distinct OCS units, thus preventing direct connectivity between these two pods.}
     \label{fig:illu_l2_no_solution}
     \end{subfigure}
     \begin{subfigure}[b]{0.335\linewidth}
     \includegraphics[scale=0.62]{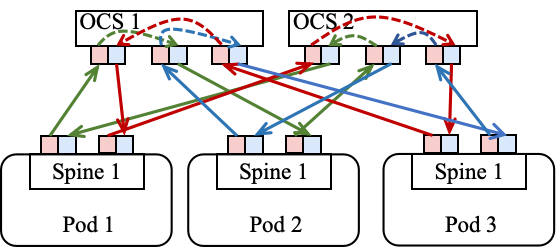}
     \caption{OCS is transparent to packet-level traffic, thus the bidirectional logical links can be established through two OCSes in \emph{Cross Wiring}.}
     \label{fig:illu_l2_protocal_incompatibility}
    \end{subfigure}
    \caption{With Cross Wiring (\ref{fig:illu_l2_protocal_incompatibility}), any logical topology can be realized with symmetry constraint. Under symmetry constraint, if the Tx of port A is connected to the Rx of port B, then the Tx of port B must be connected to the Rx of port A.}
    \label{fig:physical topology design challenge}
\end{figure*}

Since the physical topology remains largely unchanged throughout the entire cluster lifecycle, designing an appropriate physical topology is crucial. Well designed physical topologies can support more logical topologies, allow topology engineering to compute faster, and operate at larger scales. Three features can be considered to assess the quality of a physical topology: Logical topology compatibility, Cluster scalability and Topology engineering polynomial-solvability.




\textsf{\bfseries Logical topology compatibility} measures how faithfully the actual connection configuration forms the target logical topology. We use full-compatibility to represent that any logical topology can be realized through ToE. Existence of unrealizable logical topologies may lead to reduced communication efficiency and decreased cluster availability.

\textsf{\bfseries Cluster scalability} refers to the number of GPUs a cluster can accommodate at maximum scale. As we will discuss in \S\ref{sec:components}, enhancing the cluster scalability by increasing the number of OCS ports is not trivial, as more ports in an OCS results in higher insertion loss \cite{7104060}. Therefore, improving cluster scalability through physical topology design is crucial.

\textsf{\bfseries ToE polynomial-solvability} measures the complexity of ToE. In \S\ref{sec:waht_is_optimal} we show that the physical topology  design can directly affect the complexity. For large-scale OCS-based AI clusters \cite{huawei_2024_all_optical_switch}, task-level logical topology updates are required to deliver optimal performance~\cite{zu2024resiliency,wang2022topoopt}, which makes the speed of ToE solving crucial.


\begin{table}[t]
\centering
\caption{Comparison of Different Physical Topology Designs}\label{table:physical_topology_small}
\begin{tabular}{c|c|c|c}
\toprule
\textbf{Physical Topology} & \textbf{Compat.} & \textbf{Scale} & \textbf{Speed} \\
\midrule
Uniform Wiring & $\times$ & $100\%$ & $\times$ \\
Dual-link Uniform Wiring & $\checkmark$ & $50\%$ & $\checkmark$ \\
Cross Wiring & $\checkmark$ & $100\%$ & $\checkmark$ \\
\bottomrule
\end{tabular}
\end{table}

This paper explores designing an optimal
physical topology that simultaneously optimizes all three
features. The contributions are detailed below:

\begin{itemize}[itemindent=0mm]


\item We analyzed the three important features and how prior work attempted to address these features but fell short in achieving full optimization.


\item We analyzed the sufficient conditions for designing optimal physical topologies, and proposed a \textbf{\textit{Symmetric Integer Matrix Decomposition Theorem}} and a general method for optimal physical topology design based on the theorem. 

\item Based on the theorem and the method, we proposed a \textbf{Cross Wiring} physical topology compatible with various hardware and networking architectures, and corresponding polynomial-time ToE algorithm.


\item Finally, we evaluated the feasibility and superiority of Cross Wiring with testbed experiments and large-scale simulations.




\end{itemize}

We demonstrated the feasibility of Cross Wiring by evaluations over a 128-NPU cluster testbed and trace-driven large-scale simulations, showing up to $28.3\%$ improvement in machine learning (ML) training throughput, up to $14.9\%$ reduction in Maximum Link Utilization (MLU) for Facebook DCN trace and up to $27\times$ faster solving time comparing to state-of-the-art Uniform Wiring~\cite{wang2022topoopt,9651977,7877093,10892202,dong2025risk}. Compared to another analyzed Dual-link Uniform Wiring~(\S\ref{sec:dual}, \cite{poutievski2022jupiter,zu2024resiliency}), Cross Wiring achieves $2\times$ cluster scalability.


\textit{This work does not raise any ethical concerns.}

\end{sloppypar}
\section{Background}
\begin{sloppypar}

\subsection{Components in OCS-based Clusters}\label{sec:components}

OCS-based clusters come in various organizing schemes. For simplicity, we adopt this model: We group devices into \emph{EGroups} where internal communications can go directly, typically through electrical
packet switches (EPSes). \emph{OCSes} handle the communication across different EGroups via optical signals. The conversions between electrical signals and optical signals are performed with \emph{Optical Transceivers}.



\noindent\textbf{Equivalency Group (EGroup)}: An EGroup is in which devices in this group can communicate with each other through direct connections and does not need OCSes to take part. This is similar to the concept of Pods in Jupiter Evolving~\cite{poutievski2022jupiter} or Racks in TPUv4~\cite{zu2024resiliency}.



\noindent\textbf{MEMS-OCS}: An OCS is a passive device that can route optical signals without conversion to electrical signals. Physically connected to different EGroups, given a set of OCS configurations, it can create transparent inter-EGroup links. 

An OCS with a specification of $K_{\text{ocs}} \times K_{\text{ocs}}$ contains $2 \times K_{\text{ocs}}$ ports, which are divided into N and S regions. Ports from different regions can communicate, while those within the same region cannot. As shown in Fig.\ref{fig:itv}, when interconnected without Circulator, each transceiver's egress component (Tx) and ingress component (Rx) are respectively connected to the N and S regions, thus allowing the OCS to establish a path from the Tx of one transceiver to the Rx of another transceiver. As illustrated in Fig.\ref{fig:wiring_exp}, physically, a port in the N region pairs with a port in the S region; for convenience, we will refer to a pair of N and S ports as an \emph{OCS port} in the following discussions. 

OCS introduces additional insertion loss. For instance, Google’s 136$\times$136 OCS has been reported to introduce up to 2 dB of insertion loss \cite{liu2023lightwave}, while CALIENT’s 320$\times$320 OCS and the Polatis' 576$\times$576 OCS may contribute up to 3 dB\cite{oe1_product_datasheet, jensen2023all,polatis2025}. A 1100$\times$1100 port OCS has been shown to introduce up to 4 dB of insertion loss \cite{kim20031100}. As the number of ports increases, a higher reflectivity is required to compensate for optical losses while maintaining the same response time, a trade-off that can lead to increased insertion loss \cite{7104060}.


\noindent\textbf{Optical Transceiver}: An optical transceiver is a device that converts between electrical signals and optical signals. As summarized in Table \ref{tab:800g_transceiver_comp} based on publicly available information~\cite{lightcounting_2025_optics,fscom_products}, several common 800 Gbps transceivers are compared. It is essential to ensure that the power budget of the optical transceiver is no less than the insertion loss introduced by the OCS to guarantee reliable operation of an OCS-based cluster. The SR8 multimode transceiver exhibits a limited power budget of 1.8 dB, rendering it unsuitable for such clusters. The 2$\times$FR4 transceiver provides a power budget of 4.0 dB, adequate for currently deployed OCSes, while also offering a cost advantage over the 2$\times$LR4 alternative. These factors may explain the adoption of the 2$\times$FR4 transceiver in Google’s infrastructure, as reported in \cite{liu2023lightwave}.



\begin{table}[t]
\centering
\caption{Comparison of Common 800Gbps Transceivers}
\label{tab:800g_transceiver_comp}

\setlength{\tabcolsep}{5pt}
\begin{tabular}{@{}lcccc@{}}
\toprule
\textbf{Module} & \textbf{Fiber} & \textbf{Power} & \textbf{Price} & \textbf{Max} \\
\textbf{Name} & \textbf{Mode} & \textbf{Budget (dB)} & \textbf{(USD)} & \textbf{Distance} \\
\midrule
2$\times$LR4 & Single & 6.3 & 2500 & 10 km \\
8$\times$FR & Single & 4.0 & 2000 & 2 km \\
2$\times$FR4 & Single & 4.0 & 2100 & 2 km \\
DR8 & Single & 3.0 & 1800 & 500 m \\
SR8 & Multi & 1.8 & 1700 & 50 m \\
\bottomrule
\end{tabular}
\end{table}


\noindent\textbf{Circulator}: Circulators can be utilized \textbf{optionally}. They enable diplexing both Tx and Rx into a single fiber strand as shown in Fig.\ref{fig:circulator} in Appendix, which only takes one N/S port (i.e. half an \emph{OCS port}) to forward both. The Circulator reduces OCS port requirements by half, but it introduces an additional 0.5 dB - 0.7 dB insertion loss \cite{zhang2020cavity}. When using a CALIENT 320$\times$320 OCS \cite{oe1_product_datasheet}, the overall insertion loss may reach up to 3 dB + 2*0.5 dB, which already approaches the 4 dB power budget of a 2$\times$FR4 transceiver. Given that Google’s OCS exhibits relatively low insertion loss, their network architecture incorporates both 2$\times$FR4 transceivers and Circulators. Ultimately, achieving a balance among cost, cluster scalability, and insertion loss is a critical consideration in the design of OCS-based GPU clusters.





\noindent\textbf{Logical Topology}:  Logical topology specifies the required number of inter-EPS links as Fig.\ref{fig:l2_logical_topology} shows. In OCS-based clusters, the logical topology may require adjustments at the task level \cite{wang2022topoopt,2023TPU} or at daily / weekly levels \cite{poutievski2022jupiter,9651977}.

\noindent\textbf{OCS Configuration}: Once the logical topology is determined, the OCS configuration will be calculated through ToE to establish inter-EGroups links that meet the requirements of the logical topology. Fig.\ref{fig:illu_l2_protocal_incompatibility} shows an example of how the logical topology in Fig.\ref{fig:l2_logical_topology} is satisfied through the OCS configuration.

\noindent\textbf{Physical Topology}: The Physical Topology refers to the wiring between the EGroup layer and the OCS layer. Fig.\ref{fig:illu_l2_no_solution} and \ref{fig:illu_l2_protocal_incompatibility} illustrate two distinct physical topologies, highlighting how the design of the physical topology can impact the resolution of the ToE.



\begin{figure*}[htp]
    \begin{subfigure}[b]{0.49\textwidth}
     \centering
     \includegraphics[width=\textwidth]{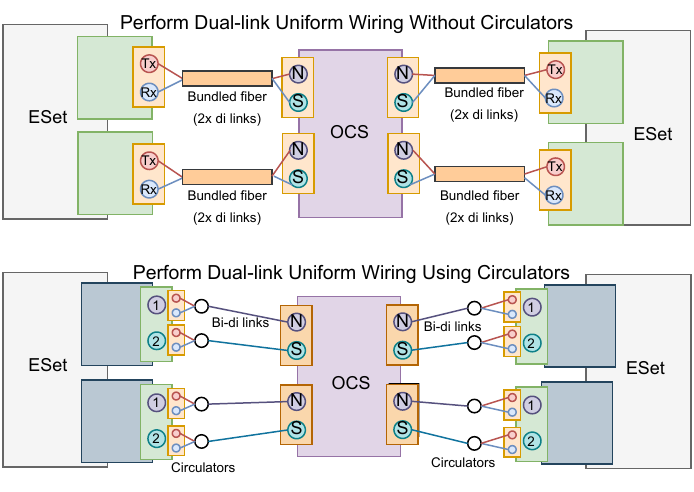}
     
     \textit{Pairing ports of the same OCS}
     
     \caption{Dual-link Uniform Wiring}\label{fig:dual}
     \end{subfigure}
     \begin{subfigure}[b]{0.49\textwidth}
     \centering
     \includegraphics[width=\textwidth]{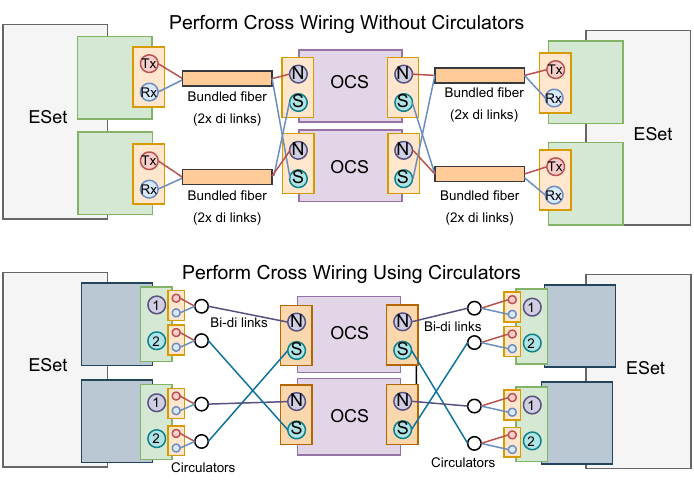}

     \textit{Pairing OCSes}
     
     \caption{Cross Wiring}\label{fig:itv}
     \end{subfigure}
     \caption{Different components are paired to create mirror-symmetric sub-topologies}
\end{figure*}


\subsection{The Basic Model}\label{sec:model_def}
To understand the role of OCS, we start by a ToE problem. We use $C=[C_{ij},i,j=1,...,P]$ to denote a logical topology, where $C_{ij}$ represents the number of links between the $i$-th EGroup and the $j$-th EGroup. Let $E$ be the set of ports connected to the OCS layer. We use $x_{mn}, m,n\in E$ to, represent the OCS configurations of all the OCSes. The egress component (Tx) of port $m$ connects to the $o_m^{\mathrm{Tx}}$-th OCS, and ingress component (Rx) of port $m$ connects to the $o_m^{\mathrm{Rx}}$-th OCS. Suppose that the Tx of port $m$ and the Rx of port $n$ are both connected to the same OCS, then $x_{mn}=1$ means that connected OCS creates a directional link from port $m$ to port $n$ and $x_{mn}=0$ means that $m$ and $n$ are not connected. To find a feasible OCS configuration $x_{mn}$, the following constraints must be met.

\noindent 1) \emph{\textbf{Logical topology constraint:}} ensuring that the logical topology is realized by the OCS configuration. 
\begin{equation} \label{logical_physical_convert}
   \forall_{i,j=1,...,P,\textbf{ } \sum_{m \in E_i,n\in E_j }} x_{mn} = C_{i,j},
\end{equation}
where $E_i$ (or $E_j$) is the set of OCS-facing ports on the $i$ -th (or $j$ -th) EGroup.

\noindent 2) \emph{\textbf{Physical topology constraint:}} The Tx of each port could connect to at most one Rx of another port and the Rx of each port could connect to at most one Tx of another port.

\begin{equation} \label{equ:intro1}
   \sum_{n} x_{mn} \leq 1\text{ and }   \sum_{m} x_{mn} \leq 1,
\end{equation}

\noindent The connection between the Tx of port $m$ and the Rx of port $n$ can be established only when both are connected to the same OCS.

\begin{equation} \label{equ:physical_design}
    x_{mn} = 0 \quad \text{when} \quad \mathrm{o}^{\mathrm{Tx}}_m \neq \mathrm{o}^{\mathrm{Rx}}_n
\end{equation}

\noindent 3) \emph{\textbf{Symmetry Constraint:}} ensuring that each logical link between two ports is bidirectional. Many layer-2 protocols, i.e., L2 forwarding, ARP protocol, etc., would fail without this constraint.
\begin{equation}\label{equ:intro5}
    \forall_{m,n \in E},  x_{mn} = x_{nm}.
\end{equation}
Note that the OCS \textbf{ operates transparently at the packet layer}, so the bidirectional logical links are not necessarily established through a single OCS.

Constraint \eqref{equ:physical_design} demonstrates that the physical topology design has a profound impact on both the feasibility and solution space of the model. This observation motivates a systematic investigation into the design of the physical topologies.


\subsection{Evaluating Physical Topology Optimality}\label{sec:waht_is_optimal}

We quantitatively evaluate the optimality of a physical topology based on the following three features.



\sfparagraph{Logical topology compatibility.}

Recall that \eqref{equ:physical_design} is determined by the design of the physical topology. Under certain physical topologies, it is easy to identify logical topologies that cannot be realized through OCS reconfiguration. Fig.\ref{fig:physical topology design challenge} illustrates an example: given a cluster containing 3 Pods adopting a widely used \emph{Uniform Wiring} (\emph{Uniform})~\cite{wang2022topoopt,9651977,7877093,10892202,dong2025risk} physical topology, where the Tx port and Rx port of the $k$-th port of the EGroups are connected to the $k$-th OCS. In Fig.\ref{fig:physical topology design challenge} each spine contains 2 ports, and the spine1 in each Pod needs to be interconnected by one link, as specified in the target logical topology shown in Fig.\ref{fig:l2_logical_topology}. When attempting OCS reconfiguration to satisfy such a logical topology, it becomes apparent that spine1 of Pod1 and spine1 of Pod3 cannot be directly connected through any OCS. 

We relax the constraint \eqref{logical_physical_convert} to best-effort and utilize \emph{Logical Topology Compatibility Rate (LTCR)} defined in \eqref{eq:LTCR} to assess Logical topology compatibility. We consider a physical topology to achieve \textbf{full compatibility} if $LTCR$ reaches a value of 1 for all possible logical topologies. 
\begin{align}
LTCR &= 1 - \frac{\sum_{i,j} \mathbb{I}_{X_{ij} < C_{ij}} \cdot ( C_{ij} -  X_{ij})}{\sum_{ij}  C_{ij}} \label{eq:LTCR} \\
\text{where }  X_{ij}&=\textstyle \sum_{m \in E_i,n\in E_j } x_{mn} \nonumber
\end{align}




\sfparagraph{Cluster scalability.}

Assuming each EGroup includes $K_{\text{egroup}}$ OCS-facing ports, with pair of Tx/Rx ports connected with $\psi$ \emph{OCS ports} of each OCS. Given that each OCS contains $K_{\text{ocs}}$ \emph{OCS ports}, the cluster scalability is measured by $\frac{K_{\text{ocs}}}{\psi} \times K_{\text{egroup}}$. To enhance cluster scalability, one direct approach is to increase $K_{ocs}$; however, this is nontrivial. As the number of ports in the MEMS OCS increases, greater reflectivity is required to offset optical losses while maintaining the same response time, which could result in increased insertion loss~\cite{7104060}. Some OCS-based clusters~\cite{wang2022topoopt,zhao2018minimalextended,9651977,7877093,dong2025risk} like TopoOpt set \( \psi = 1 \), whereas TPUv4~\cite{zu2024resiliency} set \( \psi = 2 \). We refer to a physical topology with $\psi=1$ as achieving \textbf{full scalability}.

\takeaway{Remark:} In clusters like Sirius\cite{10.1145/3387514.3406221}, each OCS is connected to only a subset of the EGroups. While this design increases the total number of GPUs the cluster can support, it significantly reduces logical topology compatibility. Specifically, Sirius can only support a highly constrained set of logical topologies and it must be coupled with highly customized topology, routing, and congestion control policies, resulting in exceptionally high control complexity and a substantial deployment barrier. Therefore, we do not consider such type of topologies in this paper.




\sfparagraph{ToE polynomial-solvability.}

Whenever a modification in the logical topology is required, OCS reconfiguration is needed based on the model defined in $\S$ \ref{sec:model_def}. OCS-based AI clusters may require task-level updating \cite{wang2022topoopt,2023TPU} to meet the high bandwidth demands of training and inference. Some Companies ~\cite{huawei_2024_all_optical_switch} are presently implementing OCS to develop large-scale AI clusters. Additionally, the current cluster's links and computation units may frequently fail \cite{liu2025faute}, and rapid ToE implies better fault mitigation and throughput. Given these considerations, ensuring the polynomial solvability of topology engineering could potentially broaden the applicability of OCS in large-scale AI clusters. As the reader will see, a naively designed physical topology may lead to the solving of topology engineering becoming \textbf{NP-Complete}.

\sfparagraph{} Then we can define what is an optimal physical topology:
\begin{definition}[Optimal Physical Topology] We call a physical topology as \textbf{Optimal} if it achieves full \textsf{Logical Topology Compatibility}, full \textsf{Cluster Scalability}, and the \textsf{ToE Polynomial Solvability} simultaneously.
\end{definition}



\subsection{Current physical topology designs}
In recent years, the design of OCS-based cluster has emerged as a prominent topic, with extensive research on topology engineering~\cite{poutievski2022jupiter,10892202,0On}, traffic engineering~\cite{poutievski2022jupiter,9651977,10.1145/3651890.3672258,mao2025atro}, and architectural design~\cite{wang2022topoopt,zu2024resiliency,7877093}. When designing a new physical topology, it is meaningful to align deployability with these studies. Furthermore, as demonstrated in Table \ref{table:physical_topology_small}, there is considerable research discussing about their physical topology design. Each topology has its own emphasis, but most are designed based on intuition, and none has successfully achieved the design of an optimal physical topology. To achieve a certain design goal, some even deliberately \textit{compromise} a feature to achieve other enhancements.

We now demonstrate current designs and their corresponding missing features.
%



\takeaway{Lack of \textsf{Logical topology compatibility}} \emph{Uniform} physical topology are widely adopted in current OCS cluster designs, such as TopoOpt~\cite{zhao2021understanding, wang2022topoopt,9651977,7877093,10892202,dong2025risk}. \emph{Uniform} fans out the OCS-faced links equally to all OCSes with $\psi=1$ and was previously believed to be optimal~\cite{zhao2021understanding}. However, the symmetry constraint fundamentally alters this optimality. In practical cluster setups, relaxing the symmetry constraint can result in numerous L2 protocols not functioning correctly. Such partial-compatibility is harmful to performance, we solved the model defined in \S\ref{sec:model_def} on a $128$-NPU cluster adopting \emph{Uniform}, revealing that the partial-compatibility can reduce ML task training throughput by up to 39.5\%. Another example of this category focuses on computing OCS reconfiguration using various heuristic strategies~\cite{10892202,0On}. Simulations in \S\ref{sec:evn_log} indicate these approaches lack theoretical guarantees and result in a 0.759 $LTCR$.

\takeaway{Lack of \textsf{Cluster scalability}}\label{sec:com_scalability} 


\begin{figure}[!htbp]
    \begin{subfigure}[b]{1.0\linewidth}
     \includegraphics[width=\linewidth]{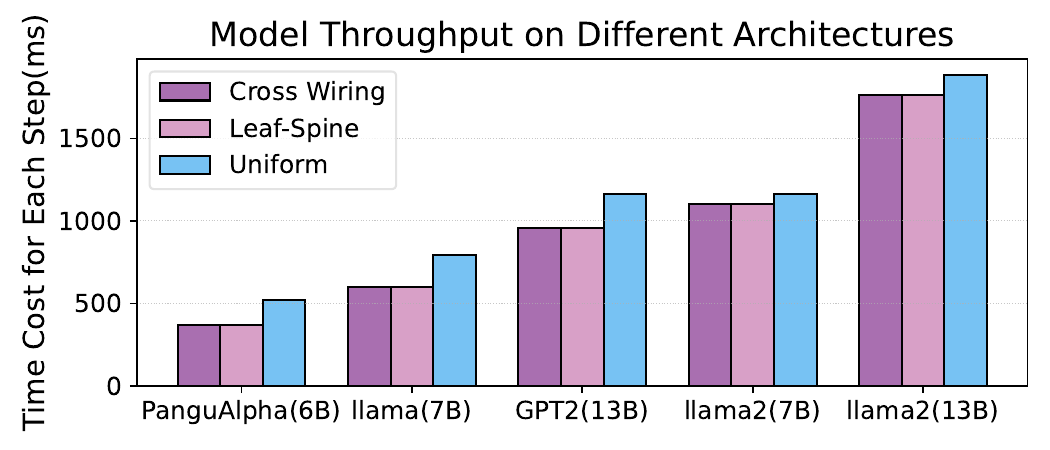}
    \caption{Up to 39.5\% reduction in training throughput}
    \label{fig:logical_topo_contention}
     \end{subfigure}
     \begin{subfigure}[b]{0.49\linewidth}
     \includegraphics[width=\linewidth]{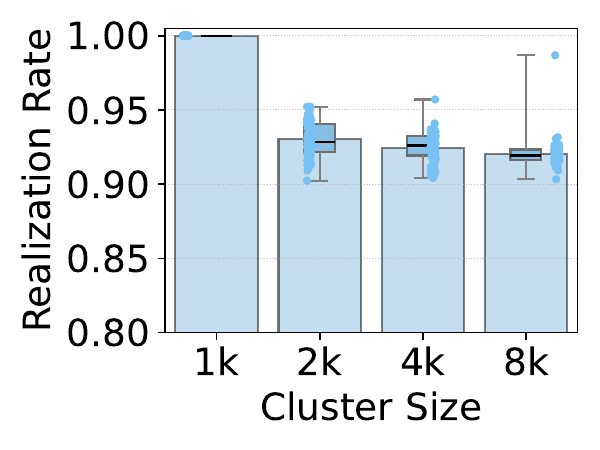}
    \caption{Unrealizable logical topos}
    \label{fig:Realization_Rate}
     \end{subfigure}
     \hfill
     \begin{subfigure}[b]{0.49\linewidth}
     \includegraphics[width=\linewidth]{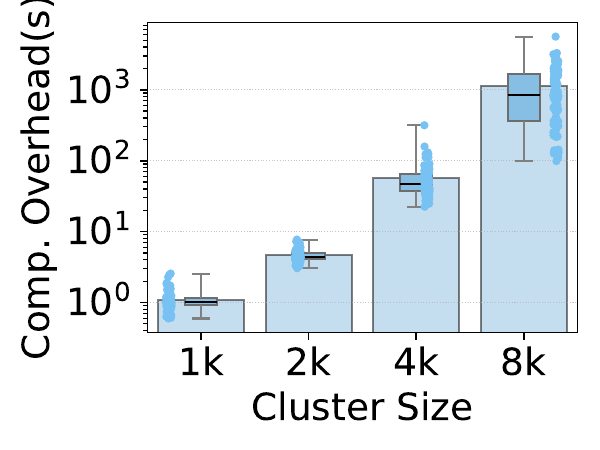}
    \caption{High Comp. Overhead}
    \label{fig:Caculation_Overhead}
     \end{subfigure}
     \caption{Widely adopted \emph{Uniform Wiring} is suboptimal}
\end{figure}

Using a single OCS to connect all EGroups is a naive approach to achieve full compatibility~\cite{0On}. However, the number of ports a single OCS can support is limited, and large-scale clusters requires multiple OCSes working together to deliver the whole network. Representative work in this category including Google's TPUv4\cite{zu2024resiliency}. To achieve logical topology compatibility, TPUv4's physical topology distributes the OCS-facing links equally among all OCSes with $\psi=2$ using circulators and transceivers. This maintains full compatibility but sacrifices the cluster scalability \textit{by half}. We have detailed this through theoretical analysis in \S\ref{sec:dual}.


\takeaway{Lack of \textsf{ToE polynomial solvability}}\label{sec:com_solv}
Typically, solving OCS reconfiguration relies on Integer Linear Programming (ILP) \cite{poutievski2022jupiter}. Fig.~\ref{fig:Caculation_Overhead} quantifies the computational overhead of solving OCS reconfiguration using Gurobi~\cite{achterberg2019s} to implement ILP, revealing exponential overhead growth with cluster scaling: solving configurations for clusters with 8k nodes requires up to 5621.18 seconds. It is straightforward to prove that under common physical topologies like \emph{Uniform}, OCS reconfiguration is an \textbf{NP-Complete} problem. 

\begin{theorem}[Uniform Wiring ToE NP-Complete]\label{theorem:uniform_suboptimal}
For Uniform Wiring physical topology, the ToE problem is NP-Complete.
\end{theorem}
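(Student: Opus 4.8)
The plan is to establish both halves of NP-completeness: membership in NP and NP-hardness via a reduction from graph edge-coloring. First I would fix the decision version of the problem: given a logical topology $C$ and a Uniform Wiring cluster with $P$ EGroups each exposing $K_{\text{egroup}}$ OCS-facing ports (so there are $K_{\text{egroup}}$ OCSes), decide whether there exists a configuration $\{x_{mn}\}$ simultaneously satisfying \eqref{logical_physical_convert}, \eqref{equ:intro1}, \eqref{equ:physical_design} and \eqref{equ:intro5}, equivalently whether $LTCR=1$ is attainable in \eqref{eq:LTCR}. Membership in NP is immediate, since a configuration has polynomial size and each of the four constraint families can be verified in polynomial time.

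The crux is to recognize that Uniform Wiring collapses ToE feasibility into a classical edge-coloring question. Under Uniform Wiring the $k$-th port of every EGroup has both its Tx and Rx on the $k$-th OCS, so each EGroup contributes exactly one representative port to each of the $K_{\text{egroup}}$ OCSes. I would then argue that on a single OCS $k$, the same-OCS rule \eqref{equ:physical_design} confines all admissible links to the $P$ ports residing on OCS $k$, while the unit-degree caps \eqref{equ:intro1} together with the symmetry constraint \eqref{equ:intro5} force the links realized on OCS $k$ to form an undirected matching $M_k$ over the $P$ EGroup representatives. Consequently a full configuration is precisely a collection of matchings $M_1,\dots,M_{K_{\text{egroup}}}$, and \eqref{logical_physical_convert} asserts that the number of matchings containing the pair $\{i,j\}$ equals $C_{ij}$. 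This is exactly a proper $K_{\text{egroup}}$-edge-coloring of the multigraph $G_C$ that places $C_{ij}$ parallel edges between EGroups $i$ and $j$ (the $C_{ij}$ parallel edges share both endpoints, hence must receive $C_{ij}$ distinct colors), so ToE is feasible iff the chromatic index satisfies $\chi'(G_C)\le K_{\text{egroup}}$.

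With this equivalence in hand, NP-hardness follows from Holyer's theorem, that deciding whether a cubic simple graph admits a proper $3$-edge-coloring is NP-complete. Given such a cubic graph $G=(V,E)$, I would build the ToE instance with $P=|V|$ EGroups, set $K_{\text{egroup}}=3$, and define $C_{ij}=1$ exactly when $\{i,j\}\in E$ and $C_{ij}=0$ otherwise, so that $G_C=G$. Because $G$ is $3$-regular we have $\Delta(G_C)=3$ and thus $\chi'(G_C)\in\{3,4\}$ by Vizing's theorem; the ToE instance is therefore feasible iff $\chi'(G_C)=3$, i.e. iff $G$ is $3$-edge-colorable. The construction is manifestly polynomial, which completes the reduction.

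The step I expect to be the main obstacle is the rigorous derivation of the per-OCS matching structure: one must show that the joint action of the transparency/same-OCS rule \eqref{equ:physical_design}, the degree bounds \eqref{equ:intro1}, and the symmetry requirement \eqref{equ:intro5} admits no configuration on OCS $k$ other than a matching over the EGroup representatives, in particular ruling out asymmetric or overloaded assignments, and that the Uniform bookkeeping of exactly one representative port per EGroup per OCS is preserved throughout. Once this equivalence to chromatic-index computation is secured, the remaining arguments are standard.
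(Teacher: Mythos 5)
Your proposal is correct, and its structural core is the same as the paper's: both arguments observe that under Uniform Wiring each EGroup contributes exactly one port to each OCS, so constraints \eqref{equ:physical_design}, \eqref{equ:intro1} and \eqref{equ:intro5} force the links realized on any single OCS to form a matching over the $P$ EGroup representatives, turning ToE feasibility into a coloring question with $K_{\text{egroup}}$ colors. Where you genuinely diverge is in how hardness is extracted from that equivalence. The paper builds a conflict graph whose virtual nodes are EGroup pairs with demand and whose edges join pairs sharing an endpoint (i.e., it colors the line graph of the demand multigraph), and then cites the NP-completeness of multigraph/multi-coloring; as written in the appendix, the transformation runs from ToE \emph{into} coloring, and the hardness direction rests on the body-text remark that multigraph coloring reduces to the model, without an explicit hard instance family. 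Your route instead reduces a concrete NP-complete problem, Holyer's cubic-graph $3$-edge-colorability, \emph{into} ToE: set $P=|V|$, $K_{\text{egroup}}=3$, let $C$ be the adjacency matrix, and use Vizing's theorem to note $\chi'(G_C)\in\{3,4\}$, so feasibility holds iff $G$ is $3$-edge-colorable. This buys an unambiguous reduction direction, an explicit check of NP membership (which the paper omits), and hardness already for $0/1$ demand matrices with row sums equal to $3$; the paper's formulation, in exchange, stays at the level of general multigraph demands and ties into the multicoloring framework it reuses elsewhere. Your flagged "main obstacle" — deriving the per-OCS matching structure — is precisely what the paper's constraint system \eqref{new_eq4_5}--\eqref{new_eq4} encodes, and your derivation of it from the port-level constraints is sound.
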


To prove Th. \ref{theorem:uniform_suboptimal}, we define the binary variable $x_{ijk}$ to denote the inter-EGroup connectivity under \emph{Uniform}, where $x_{ijk}=1$ indicates that the $k$-th OCS establishes a connection from the $k$-th port of the $i$-th EGroup to the $k$-th port of the $j$-th EGroup. 

The formulation in \S\ref{sec:model_def} can be described as follows:
\begin{align}
    \sum_{m \in E_i,n\in E_j } x_{mn} = C_{i,j}, &\quad \forall_{i,j=1,...,P}\\
     x_{ijk} = x_{mn}, &\quad \forall_{m \in E_i,n\in E_j,k=\mathrm{o}^{\mathrm{Tx}}_m} \label{new_eq4_5} 
     \\
     \sum_{j} x_{ijk} \leq 1, &\quad \forall_{i,k} \label{new_eq2_5} 
     \\
     \sum_{i} x_{ijk} \leq 1, &\quad \forall_{j,k} \label{new_eq3_5}
     \\
     x_{ijk} = x_{jik},       &\quad \forall_{i,j,k} \label{new_eq4}  
\end{align}


With the definition, we show that for clusters using \emph{Uniform Wiring}, performing ToE with given logical topology is an NP-Complete problem by reducing Multigraph Coloring problem to the model. The proof is shown in detail in Appendix. \ref{App:proof_uniform}.

\end{sloppypar}

\section{Path to an Optimal Physical Topology}\label{sec:lumoscore}


Here we discuss how to design an Optimal Physical Topology. We will start with the strongest requirement \textsf{ToE Polynomial Solvability}.

\subsection{Achieving ToE Polynomial Solvability}
As demonstrated in Th. \ref{theorem:uniform_suboptimal}, ToE problem over Uniform Wiring is NP-Complete. However, this is not always true for other physical topologies. Previous researchers have addressed the ToE problem for Patch Panels, providing a feasibility proof via Th. \ref{lem:minirewir} and a polynomial-time solution when symmetry constraints are disregarded \cite{zhao2019minimal}:

\begin{theorem}[Integer Matrix Decomposition]\label{lem:minirewir}
For any integer matrix $C$, there exists $K$ integer matrices, such that $C = x^{(1)}+ x^{(2)}+\dots+ x^{K}$, and for any $i=1,\dots,I$,$j=1,\dots,J$,
$$\left\lfloor\frac{C_{ij}}{{}K}\right\rfloor \leq x_{ijk} \leq \left\lceil\frac{C_{ij}}{{K}}\right\rceil,$$
$$\left\lfloor\frac{\sum_iC_{ij}}{{K}}\right\rfloor \leq \sum_i x_{ijk} \leq \left\lceil\frac{\sum_iC_{ij}}{{K}}\right\rceil,$$
$$\left\lfloor\frac{\sum_jC_{ij}}{{K}}\right\rfloor \leq \sum_j x_{ijk} \leq \left\lceil\frac{\sum_jC_{ij}}{{K}}\right\rceil.$$
\end{theorem}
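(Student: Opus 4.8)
The plan is to prove the theorem by induction on $K$, peeling off one matrix at a time. The base case $K=1$ is trivial since $x^{(1)}=C$. For the inductive step I assume the statement for $K-1$ and proceed in two moves: first extract a single integer matrix $y=x^{(K)}$ that \emph{consistently rounds} the uniform fractional split $C_{ij}/K$, and then apply the induction hypothesis to the residual $C':=C-y$ with parameter $K-1$, so that the $K-1$ matrices it produces, together with $y$, certify the claim for $K$.

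The heart of the argument is showing that such a $y$ exists, i.e.\ an integer matrix with $\lfloor C_{ij}/K\rfloor\le y_{ij}\le\lceil C_{ij}/K\rceil$ for all $i,j$ \emph{and} with each row sum $\sum_j y_{ij}$ (resp.\ column sum $\sum_i y_{ij}$) pinned between the floor and ceiling of $(\sum_j C_{ij})/K$ (resp.\ $(\sum_i C_{ij})/K$). I would establish this by a network-flow construction. Build a bipartite circulation network with one node per row $i$ and one per column $j$; give arc $(i,j)$ the capacity interval $[\lfloor C_{ij}/K\rfloor,\lceil C_{ij}/K\rceil]$ carrying $y_{ij}$, attach a source by arcs into each row node with interval $[\lfloor(\sum_j C_{ij})/K\rfloor,\lceil(\sum_j C_{ij})/K\rceil]$ and a sink by arcs out of each column node with interval $[\lfloor(\sum_i C_{ij})/K\rfloor,\lceil(\sum_i C_{ij})/K\rceil]$, and close it with a return arc. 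Flow conservation forces the arc flows to form a matrix whose row and column sums equal the source/sink flows. The fractional circulation $y_{ij}=C_{ij}/K$ is feasible because every quantity lies between its own floor and ceiling; since all capacity bounds are integers, Hoffman's integral circulation theorem yields an \emph{integral} feasible circulation, giving the desired $y$.

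The second ingredient is a short arithmetic lemma guaranteeing that the bounds propagate through the peeling. Writing $C_{ij}=qK+r$ with $0\le r\le K-1$, any $y_{ij}\in\{\lfloor C_{ij}/K\rfloor,\lceil C_{ij}/K\rceil\}$ satisfies
\[
\lfloor C_{ij}/K\rfloor \le \lfloor C'_{ij}/(K-1)\rfloor
\quad\text{and}\quad
\lceil C'_{ij}/(K-1)\rceil \le \lceil C_{ij}/K\rceil ,
\]
where $C'_{ij}=C_{ij}-y_{ij}$; a case split on $y_{ij}=q$ versus $y_{ij}=q+1$ verifies both inequalities, and the identical computation applies to row sums and column sums (each of which also rounds to within one of its average). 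Hence the $C'$-floor/ceiling band returned by the induction hypothesis is contained in the original $C$-floor/ceiling band, and the induction closes.

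The main obstacle is the consistent-rounding step, together with the subtlety of why it must be done one matrix at a time. If one tried to formulate the entire decomposition as a single integer program, the three families of summation constraints (over $k$ for fixed $i,j$; over $i$ for fixed $j,k$; over $j$ for fixed $i,k$) form a three-dimensional transportation matrix, which is \emph{not} totally unimodular, so integrality of the polytope is lost. The induction is exactly what keeps every rounding step two-dimensional, where total unimodularity (equivalently the integral circulation theorem) applies. The remaining work is bookkeeping: confirming that the simultaneous floor/ceiling control of entries, row sums, and column sums all survives each peeling, which is precisely what the arithmetic lemma above delivers.
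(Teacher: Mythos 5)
Your proof is correct, but be aware of one mismatch with the premise of the exercise: the paper never proves this statement at all. Theorem~\ref{lem:minirewir} is imported verbatim from the cited Minimal Rewiring work~\cite{zhao2019minimal} and is used as a black box (inside the proof of Theorem~\ref{lem:polynomial-solvable} and in Appendix~\ref{App:proof_achievable}), so the only comparable in-paper argument is the proof of the symmetric variant, Theorem~\ref{lem:matrix_decomp}. Judged on its own merits, your argument goes through. The consistent-rounding matrix $y$ exists by exactly the mechanism you describe: the uniform splitting $y_{ij}=C_{ij}/K$ is a feasible fractional circulation in your bipartite network, all interval bounds are integers, and Hoffman's integral circulation theorem (equivalently, total unimodularity of the network's incidence structure) yields an integral $y$ whose entries, row sums, and column sums all remain in their floor/ceiling bands. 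Your arithmetic lemma also checks out: writing $C_{ij}=qK+r$ with $0\le r\le K-1$, both cases $y_{ij}=q$ and $y_{ij}=q+1$ give $\lfloor C_{ij}/K\rfloor\le\lfloor (C_{ij}-y_{ij})/(K-1)\rfloor$ and $\lceil (C_{ij}-y_{ij})/(K-1)\rceil\le\lceil C_{ij}/K\rceil$; and since each row/column sum of $y$ is an integer trapped in a band of width at most one, it equals the floor or the ceiling of the corresponding average, so the identical case split applies to the marginals. Hence the band returned by the induction hypothesis for $C'=C-y$ with parameter $K-1$ nests inside the band for $C$ with parameter $K$, and the induction closes. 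In spirit your proof matches how the paper handles Theorem~\ref{lem:matrix_decomp}: exhibit the uniform fractional solution as feasible in a flow model, then invoke flow integrality to get an integer solution. What you add, and what the two-part symmetric case never has to confront, is the machinery needed for $K>2$ parts: the one-matrix-at-a-time peeling plus the band-nesting lemma. Your observation that a one-shot formulation over all $K$ parts is a three-dimensional transportation polytope that is not totally unimodular, so the induction is precisely what keeps every rounding step two-dimensional, is a genuine insight that neither the paper nor its appendices articulate, and it correctly explains why the peeling is a necessity rather than a convenience.
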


However, this cannot solve the ToE problem in our scenario, as we must maintain additional \textit{symmetry constraints} \eqref{equ:intro5} for the EGroup linked to OCS. It requires that if the Tx of Port A is connected to the Rx of Port B, then the Rx of Port A must also be connected to the Tx of Port B, which is also discussed in previous work~\cite{10892202,dong2025risk}. This extra constraint breaks the original theorem and algorithm. To meet the requirement, we propose a \emph{Symmetric Integer Matrix Decomposition Theorem}.







\begin{theorem}[Symmetric Integer Matrix Decomposition]\label{lem:matrix_decomp}
For any symmetric integer matrix $C$, there exists an integer matrix $A$, such that $C=A+A^T$ and $$\left\lfloor \frac{\sum_j C_{ij} }{2}\right\rfloor \leq \sum_j A_{ij} \leq \left\lceil \frac{\sum_j C_{ij} }{2}\right\rceil,\forall i.$$
$$\left\lfloor \frac{\sum_i C_{ij} }{2}\right\rfloor \leq \sum_i A_{ij} \leq \left\lceil \frac{\sum_i C_{ij} }{2}\right\rceil,\forall j.$$
\end{theorem}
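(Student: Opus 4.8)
The plan is to recast the decomposition as an orientation problem on a multigraph and then invoke the classical Eulerian balanced-orientation argument. First I would build, from the symmetric matrix $C$, an undirected multigraph $G$ on the vertex set $\{1,\dots,P\}$ that places $C_{ij}$ parallel edges between each pair of distinct vertices $i$ and $j$; the diagonal is treated separately by setting $A_{ii}=C_{ii}/2$, which is why the statement is meaningful only when every $C_{ii}$ is even (in our application $C_{ii}=0$, since an EGroup never links to itself). An \emph{orientation} of $G$---a choice of direction for every edge---then corresponds to an integer matrix $A$ via $A_{ij}=(\text{number of edges oriented }i\to j)$. By construction $A_{ij}+A_{ji}=C_{ij}$ for $i\ne j$, so $C=A+A^{T}$; moreover the row sum $\sum_j A_{ij}$ is exactly the out-degree of $i$ and the column sum $\sum_i A_{ij}$ is exactly the in-degree of $i$.

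Next I would show that the two bound conditions reduce to finding a \emph{balanced} orientation. Writing $d_i=\sum_j C_{ij}$ for the degree of vertex $i$, the out-degree and in-degree of $i$ always sum to $d_i$; hence if I can guarantee $|\mathrm{out}(i)-\mathrm{in}(i)|\le 1$ at every vertex, then $\mathrm{out}(i)\in\{\lfloor d_i/2\rfloor,\lceil d_i/2\rceil\}$ and, complementarily, $\mathrm{in}(i)\in\{\lfloor d_i/2\rfloor,\lceil d_i/2\rceil\}$. Because the row and column bounds in the statement are precisely these floor/ceiling windows for $d_i$, a single balanced orientation certifies both families of inequalities at once. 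This is the key simplification: I never need to control individual entries $A_{ij}$, only the vertex degrees, which is why the per-entry bounds of Theorem~\ref{lem:minirewir} are not required here.

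I would then construct the balanced orientation with an Euler-tour argument. By the handshake lemma the number of odd-degree vertices of $G$ is even, so I pair them arbitrarily and add one auxiliary edge per pair to obtain a multigraph $G'$ in which every vertex has even degree. Each connected component of $G'$ therefore admits an Eulerian circuit; orienting every edge in the direction it is traversed makes in-degree equal to out-degree at each vertex of $G'$, since each passage through a vertex consumes one incoming and one outgoing edge. Finally I delete the auxiliary edges: each deletion perturbs the in/out balance by exactly one at its two (odd-degree) endpoints and leaves all other vertices untouched. Since every odd-degree vertex of $G$ carries exactly one auxiliary edge, the resulting orientation of $G$ satisfies $|\mathrm{out}(i)-\mathrm{in}(i)|=1$ at odd-degree vertices and $0$ elsewhere---exactly the balanced orientation required, from which $A$ is read off.

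The main obstacle I anticipate is bookkeeping rather than depth: I must verify that removing an auxiliary edge changes each endpoint's balance by precisely one and never more, and that the Eulerian construction behaves correctly across several components simultaneously (adding an auxiliary edge may merge components, which is harmless but should be tracked). The only genuinely delicate point is the diagonal, where the identity $2A_{ii}=C_{ii}$ forces $C_{ii}$ to be even; I would therefore state the theorem under the standing assumption $C_{ii}=0$ that holds for every logical topology in our setting, noting that odd diagonal entries are the sole obstruction to the claim.
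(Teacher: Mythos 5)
Your proof is correct, but it follows a genuinely different route from the paper. The paper proves this theorem by reducing the decomposition to a Minimum Cost Flow instance (supply nodes $C_{ij}$, intermediate nodes $A_{ij}$, demand nodes $\sum_j A_{ij}$, a dummy node closing the circulation, and the window $\left[\left\lfloor \sum_j C_{ij}/2\right\rfloor,\left\lceil \sum_j C_{ij}/2\right\rceil\right]$ encoded as arc capacities), then exhibits the fractional solution $A_{ij}=C_{ij}/2$ and invokes integrality of MCF polytopes to obtain an integer solution. You instead view $C$ as an undirected multigraph and produce $A$ as a balanced orientation via the classical Euler-tour argument (pair odd-degree vertices, add auxiliary edges, orient along Eulerian circuits, delete the auxiliary edges); your observation that only vertex degrees, not individual entries, need to be controlled is exactly right, and the bookkeeping you flag (each odd-degree vertex carries exactly one auxiliary edge, so its balance shifts by precisely one) goes through. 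Your route is more elementary and self-contained, and it has one concrete advantage: it exposes the parity obstruction on the diagonal ($C=A+A^{T}$ forces $2A_{ii}=C_{ii}$), which the paper's statement of the theorem silently ignores even though it makes the claim false for odd diagonal entries; restricting to $C_{ii}=0$ (or even), as you do, is the honest form of the statement. What the paper's route buys in exchange is algorithmic: the MCF formulation is not merely a proof device but is reused as the polynomial-time ToE solver and extends to the online MinRewiring objective (the MDMCF algorithm), and its running time is polynomial in the bit size of $C$, whereas a naive Euler tour on the multigraph is pseudo-polynomial (the edge count is $\sum_{i<j}C_{ij}$). That gap in your construction is closable --- for each pair $(i,j)$ orient $\lfloor C_{ij}/2\rfloor$ parallel edges in each direction first, leaving a simple leftover graph for the Euler argument --- but you would need to say so to match the complexity claims the paper builds on top of this theorem.
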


We prove the Theorem \ref{lem:matrix_decomp} by reducing it to a Minimum Cost Flow (MCF) model, which demonstrates that it is solvable, and can be solved in polynomial time.

\begin{proof}
We show this problem can be reduced to MCF problem \texttt{DecomOPT(C,A)}. 

Given $\forall_{i} C_{i,j} = C_{j,i}$, we rewrite the original problem as follows:
   \begin{equation}\label{equ_rewrite:1}
    0\leq A_{i,j}\leq C_{i,j},\forall_{i,j} 
\end{equation}
\begin{equation}  \label{equ_rewrite:3}
    A_{i,j} + A_{j,i} = C_{i,j},\forall_{i,j} 
\end{equation}
\begin{equation}\label{equ_rewrite:5}
\left\lfloor\frac{\sum_i C_{ij} }{2}\right\rfloor \leq \sum_{i} A_{i,j} \leq \left\lceil\frac{\sum_i C_{ij} }{2}\right\rceil,\forall j 
\end{equation}

We formulate the MCF problem as follows. Suppose there are $P$ elements in each row of $C$, There exists $\frac{P^2}{2}$ supplies $\{C_{01},C_{02},\dots,C_{ij},\dots\}$ and $P$ demands $\{\sum_jA_{0j},\sum_jA_{1j},\dots\}$. We denote supply nodes $C_{ij}$, intermediate nodes $A_{ij}$, demand nodes $\sum_j A_{ij}$, and a dummy node (illustrated in Appendix Fig.\ref{fig.sym_matrix_decomp}). For the dummy node, it connects to each supply node $C_{ij}$ with a capacity [$C_{ij}$,$C_{ij}$]. For each supply node $C_{ij}$, it connects to intermediate node $A_{ij}$ and $A_{ji}$, both with a capacity of $[0,C_{ij}]$. For intermediate node in $A_{i*}$, there exists a link from $A_{ij}$ to the demand node $\sum_j A_{ij}$. For all the demand nodes $\sum_j A_{ij}$, they connect to the dummy node with a capacity of $\left[\left\lfloor \frac{\sum_j C_{ij}}{2} \right\rfloor,\left\lceil \frac{\sum_jC_{ij}}{2} \right\rceil\right]$. 

The links from supply node to intermediate nodes are equivalent to constraint \eqref{equ_rewrite:1} and constraint \eqref{equ_rewrite:3}. The links from demand nodes to the dummy node are equivalent to constraint \eqref{equ_rewrite:5}. Then the original problem is reduced to the formulated MCF problem. Clearly $\forall_{i,j}  \sum_i A_{i,j} = \frac{C_{i,j}}{2}$ is a feasible solution meeting the constraint \eqref{equ_rewrite:1}, \eqref{equ_rewrite:3} and \eqref{equ_rewrite:5}. According to Lemma~\cite{wolsey2020integer}, for MCF problems, if there exists a real number feasible solution, there must exist an integer feasible solution. This completes the proof.
\end{proof}

We thus present the conditions to the physical topology. Th. \ref{lem:matrix_decomp} demonstrates that once these conditions are satisfied, for any logical topology, they 1) are always solvable: This satisfies \textsf{Logical topology compatibility}. and 2) are solvable in polynomial time: This satisfies the \textsf{ToE polynomial-solvability}.





\begin{theorem}[Polynomial Solvable Physical Topologies]\label{lem:polynomial-solvable}
If a physical topology can be decomposed into two mirror-symmetric sub-physical topologies, then the ToE problem for arbitrary logical topology on this physical topology is Polynomial Solvable.
\end{theorem}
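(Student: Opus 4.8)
The plan is to constructively produce a feasible OCS configuration in polynomial time by splitting the bidirectional demand into two directed halves and routing one half on each of the two mirror-symmetric sub-topologies, so that the symmetry constraint \eqref{equ:intro5} is discharged by the mirror structure itself rather than being enforced inside a single OCS. This is exactly the regime enabled by the packet-layer transparency noted after \eqref{equ:intro5}: the two directions of a logical link may traverse different OCSes.

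First, since every logical link is undirected, the logical topology $C$ is symmetric, so Th.~\ref{lem:matrix_decomp} applies and yields, in polynomial time via its MCF reduction, an integer matrix $A$ with $C = A + A^{T}$ whose row sums $\sum_j A_{ij}$ and column sums $\sum_i A_{ij}$ each lie within one of $\tfrac12\sum_j C_{ij}$ and $\tfrac12\sum_i C_{ij}$. I read $A$ as a directed demand ($A_{ij}$ directed links from EGroup $i$ to EGroup $j$). Writing $T_1,T_2$ for the two mirror-symmetric sub-topologies with mirror bijection $\phi$ on OCSes and its induced relabeling on ports, I will realize $A$ on $T_1$ and the mirror demand $A^{T}$ on $T_2$.

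Second, realizing $A$ on $T_1$ is a ToE instance with the symmetry constraint \eqref{equ:intro5} dropped, because each directed link lives entirely within $T_1$; this is precisely the symmetry-free setting solved in polynomial time by the Integer Matrix Decomposition Th.~\ref{lem:minirewir}, which splits $A$ into per-OCS sub-matchings respecting the port constraints \eqref{equ:intro1} and \eqref{equ:physical_design}. The balanced sums from Th.~\ref{lem:matrix_decomp} are exactly what make this feasible: the out-degree $\sum_j A_{ij}$ imposed on $T_1$ is at most $\lceil \tfrac12\sum_j C_{ij}\rceil$, i.e. at most the number of $T_1$-facing Tx ports of EGroup $i$ (half its total, since $\sum_j C_{ij}$ never exceeds its port count), and symmetrically for in-degrees, so no port budget is exceeded. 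I then transport the $T_1$ configuration through $\phi$: each established link $\mathrm{Tx}(m)\!\to\!\mathrm{Rx}(n)$ in $T_1$ becomes $\mathrm{Tx}(\phi(n))\!\to\!\mathrm{Rx}(\phi(m))$ in $T_2$, which by mirror symmetry is realizable on $T_2$ and realizes $A^{T}$. Because $m$ and $\phi(m)$ sit on the same EGroup and the two directions of every logical link now coexist (one per half), the aggregate satisfies \eqref{logical_physical_convert} with $X_{ij}=A_{ij}+A_{ji}=C_{ij}$ and meets \eqref{equ:intro5} at the packet layer.

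The hardest part will be making ``mirror-symmetric sub-physical topologies'' precise enough that $\phi$ is a well-defined involution on ports pairing each port with a same-EGroup mirror port, and that mirroring any valid $T_1$ configuration always yields a valid $T_2$ configuration satisfying \eqref{equ:physical_design}; this is the step that genuinely consumes the mirror hypothesis. The remaining subtlety is to verify that the half-integer feasible point $A_{ij}=\tfrac12 C_{ij}$ underlying Th.~\ref{lem:matrix_decomp} aligns the degree bounds of the symmetric decomposition exactly with the per-half port capacity. Once these two alignments are checked, polynomiality is immediate, since both the MCF solve and the matrix decomposition run in polynomial time.
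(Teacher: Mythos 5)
Your proposal follows the same route as the paper's own proof: decompose $C=A+A^{T}$ via the Symmetric Integer Matrix Decomposition (Th.~\ref{lem:matrix_decomp}), realize $A$ on one sub-topology with the symmetry constraint relaxed using Th.~\ref{lem:minirewir}, mirror that configuration onto the other sub-topology to realize $A^{T}$, and merge the two halves so that \eqref{logical_physical_convert}--\eqref{equ:intro5} hold. Your additional care about the port-budget alignment and the precise definition of the mirror bijection is sound and in fact fills in details the paper defers to its appendix (the proof of Th.~\ref{theorem:all logical topology achievable}), so the proposal is correct and essentially identical in approach.
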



\begin{proof}
For a physical topology can be divided into two \emph{mirror-symmetric} sub-physical topologies, namely $G$ and $G_\bot$. By decomposing the logical topology C into $A$ and $A^T$, we solve the topology engineering sub-problems using $A$ on $G$ and $A^T$ on $G_\bot$. If $x_{mn}^{sub1}=1$ in the solution of the first sub-problem, then $x_{nm}^{sub2}=1$ in the solution of the second sub-problem.

With Th. \ref{lem:minirewir} we can find the solution $x_{mn}^{sub1}$ given $A$ and $G$ relaxing symmetry constraint \eqref{equ:intro5}. Then we merge $x_{mn}^{sub1}$ and $x_{mn}^{sub2}$ to obtain the final solution for ToE problem, i.e. $x_{mn}=x_{mn}^{sub1}+x_{mn}^{sub2}$, which satisfies the constraints \eqref{logical_physical_convert} to \eqref{equ:intro5}. 
\end{proof}




\subsection{Polynomial Solvable Physical Topologies}

Based on theorem \ref{lem:polynomial-solvable}, we now discuss two types of physical topologies satisfying the property of \textbf{mirror-decomposable}. For one type, it is decomposed into two identical topologies, i.e. $G = G_\bot$. We refer to this as \textbf{Dual-link Uniform Wiring}, where two OCS ports on the same OCS are paired. For the other type, it is decomposed into two chiral-isomorphic topologies, i.e. $G = G_\bot^T$. We refer to this as \textbf{Cross Wiring}, where two OCSes providing the same connection are paired. 


\subsubsection{Dual-link Uniform Wiring}\label{sec:dual}
The Dual-link Uniform Wiring, adopted by TPUv4~\cite{zu2024resiliency}, necessitates \( \psi=2 \) to ensure that each EGroup establishes a connection with two \emph{OCS ports} in the same OCS. We pair these two \emph{OCS ports} and label them as odd-numbered and even-numbered \emph{OCS ports}. To clearly explain the wiring, we use a network in Fig.\ref{fig:dual} as an example, where the port using a transceiver logically contains ingress (Rx) and egress (Tx) parts, i.e.,
\begin{itemize}[itemindent=0mm, leftmargin=3mm]
\item The ingress of the ($2\times k$)-th port and the egress of the ($2\times k$)-th port are connected to the $k$-th OCS.
\item The ingress of the ($2\times k$+1)-th port and the egress of the ($2\times k$+1)-th port are connected to the $k$-th OCS.
\end{itemize}

Dual-link Uniform Wiring divides the whole physical topology into two \textbf{identical} sub-physical topologies. The first sub-physical topology $G_0$ includes the $N$ port of the odd-numbered $OCS$ $port$ and the $S$ port of the even-numbered $OCS$ $port$, the second sub-physical topology $G_1$ includes the $S$ port of the odd-numbered $OCS$ $port$ and the $N$ port of the even-numbered $OCS$ $port$. Specifically, to ensure the symmetry constraint, if some OCS in $G_0$ connects to the ingress/egress part of a port in a EGroup, then the same OCS in $G_1$ must connect to the egress/ingress part of the same port. This indicates that $G_0$ and $G_1$ are mirrored. In other words, if $G_0$ can realize a logical topology $A$, then $G_1$ can realize the transpose of $A$.

With extra symmetry provided by circulators, compared to the circulator-equipped solution shown in Fig.\ref{fig:dual} decomposing two transceivers, we have a \textit{Slim Dual-link Uniform Wiring} which decomposes the two bi-directional links over a single transceiver. This has a coarser logical topology granularity but lower hardware compatibility, and we'll discuss about it in Appendix \ref{App:slim-dual}.


\subsubsection{Cross Wiring} 

For \textit{Uniform Wiring}, both the ingress and the egress of the $n$-th port connect to the $n$-th OCS. In contrast, As shown in Fig.\ref{fig:itv} as an example, Cross Wiring \textit{swaps} the wiring target between the $(2 \times k)$-th and the $(2 \times k + 1)$-th ingress ports, i.e.

\begin{itemize}[itemindent=0mm, leftmargin=3mm]
\item The ingress of the $(2 \times k)$-th port and the egress of the $(2 \times k+1)$-th port are connected to the $(2 \times k+1)$-th OCS.
\item The ingress of the $(2 \times k+1)$-th port and the egress of the $(2 \times k)$-th port are connected to the $(2 \times k)$-th OCS.
\end{itemize}

Similarly, Cross Wiring can support scenarios with circulators. In Fig.~\ref{fig:itv} we show an example of a cluster using transceivers with circulators, where a circulator-compatible transceiver includes two ports, each connected to a circulator and then circulator connects to either the N port or the S port of the OCS. Cross Wiring swaps the wiring target between the $(2 \times k)$-th and the $(2 \times k + 1)$-th transceiver's second port.

Cross Wiring divides the whole physical topology into two \textbf{chiral-isomorphic} sub-physical topologies. The first sub-physical topology includes all the even-numbered OCSs, denoted by $G_0 = \{O_0, O_2, \dots, O_{K_\text{egroup}-2}\}$, and the second sub-physical topology includes all the odd-numbered OCSs, denoted by $G_1 = \{O_1, O_3, \dots,$ $O_{K_\text{egroup}-1}\}$. In all OCS groups, the physical connection of the $(2 \times k)$-th OCS is the transpose of that of the $(2 \times k+1)$-th OCS, where $(2 \times k)$ is an even index. Specifically,
if the $(2 \times k)$-th OCS connects to the ingress/egress part of a port in a EGroup, then the $(2 \times k+1)$-th OCS must connect to the egress/ingress part of the same port. This indicates that the two sub-physical topologies are mirrored. In other words, if the first sub-physical topology can realize a logical topology $A$, then the second sub-physical topology can realize the transpose of $A$.

\paragraph{Physical Placement Scheme.}
We detail our network layout, i.e. how devices are placed, in Cross Wiring. Since OCSes are paired in our physical topology, they are positioned adjacently within the same rack to optimize space and efficiency. As outlined in \emph{Jupiter Rising}~\cite{singh2016jupiter}, to mitigate the complexity of deployment, we strategically bundle fibers in proximity to the spine racks, with each fiber meticulously labeled as per the procedures detailed in \emph{Cross Wiring} when near the OCS racks. 

Using a system equipped without Circulator as an exemplary case. A fiber-pair is initially routed through a distribution frame to the vicinity of the paired OCS units. At this junction, the fiber-pair is divided into two uni-directional fibers, as depicted in Fig.\ref{fig:wiring_exp}, the termination of one end of the original fiber occurs at an optical transceiver, while the split unidirectional fibers are subsequently connected to the paired OCS units, adhering to the Cross Wiring specification. 

\begin{figure}[!htbp]
    \centering
    \includegraphics[width=0.8\linewidth]{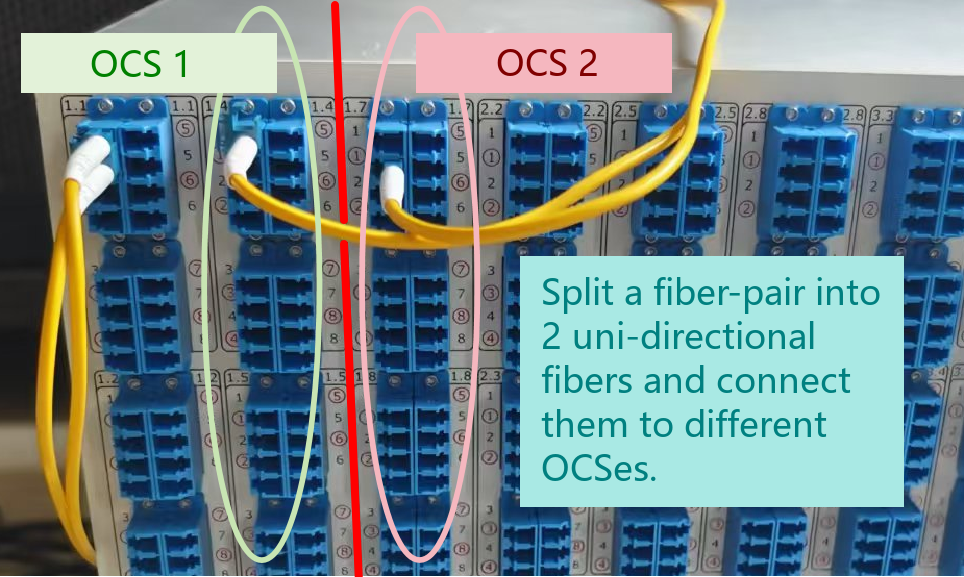}
    
    \textit{\footnotesize We simulate multiple OCSes through logical partitioning}
    \caption{Cross Wiring without Circulators}\label{fig:wiring_exp}
\end{figure}

\subsection{Comparing two Topologies} 


Both implementations can meet the requirement \textsf{\textbf{Logical topology compatibility}} and \textsf{\textbf{ToE polynomial-solvability}}, as guaranteed by Th. \ref{lem:polynomial-solvable}. Dual-link Uniform Wiring and Cross Wiring correspond to two scenarios of mirror symmetry. The former designs sub-topologies by pairing two ports of the same OCS, while the latter designs sub-topologies by pairing two OCSes. 

For \textsf{\textbf{Cluster scalability}}, Cross Wiring doubles the scalable cluster size compared to Dual-link Uniform Wiring, with or without circulators. In clusters without circulators, Cross Wiring clearly doubles the scalability, because each spine connects to $\psi=1$ OCS port. When employing circulator-compatible transceivers with Circulators in clusters such as TPUv4 and Jupiter Evolving, Cross Wiring partitions the circulator-compatible transceivers of each EGroup into odd-indexed and even-indexed groups, connecting odd-indexed transceivers to even-indexed ones. 

Cross Wiring is a general pattern, applicable to multiple existing architectures. In clusters like Jupiter Evolving, when using Cross Wiring, the logical topology $C$ is decomposed into $A$ and $A^T$ based on Theorem~\ref{lem:matrix_decomp}, it suffices to represent the link demands between odd and even transceivers with the elements in $A$. In TPUv4,where each rack features six faces (top, bottom, left, right, front, back), Cross Wiring enforces face-specific connectivity: top faces exclusively connect to bottom faces of another rack, left to right, and front to back. As described in \cite{zu2024resiliency}, this structural constraint remains fully aligned with TPUv4's resource scheduling paradigm.


With currently under development FR8 transceivers~\cite{liu2023lightwave} used in conjunction with a circulator, Dual-link Uniform Wiring could achieve same cluster scalability of Cross Wiring. However, Cross Wiring can still double the cluster scalability in a cluster using the future potential transceivers such as 2xFR8 with or without circulators.

\takeaway{Takeaway.} We discussed the general methodology for designing optimal physical topologies. Derived from this methodology, \textbf{Cross Wiring} we proposed is not only applicable to existing architectures, but also serves as a universal design pattern in the face of various physical devices in the future.


\section{Optimality of Cross Wiring}\label{section:physicaltopo}

Here we prove that Cross Wiring is indeed optimal.

\subsection{Full Compatibility and Scalability}\label{Optimality}

First, we define the legal Logical Topologies:

\begin{definition}[Logical Topology]
    For network with $P$ EGroups and each has $K_{\text{egroup}}$ OCS-facing ports, we use a $P\times P$ matrix $C=[C_{ij}]$ to represent a logical topology, where $C_{ij}$ is the number of links between the $i$-th EGroup and the $j$-th EGroup. We have a \textbf{Symmetry Constraint} as in \eqref{equ:intro5}: 
\begin{equation}\label{eqn:logical_topology_symmetric_constraint}
C_{ij}=C_{ji}, \forall i,j.
\end{equation}
We have a \textbf{Fan-out Constraint} that the total number of ingress/egress links cannot exceed $K_{\text{egroup}}$:
\begin{equation}\label{eqn:logical_topology_egress_constraint}
\sum_{j}C_{ij}=\sum_{j}C_{ji}\leq K_{\text{egroup}},\forall i.
\end{equation}
\end{definition}

Then we show Cross Wiring has full \textsf{Logical Topology Compatibility} and \textsf{Cluster Scalability}. Given that Cross Wiring is derived from Theorem \ref{lem:polynomial-solvable}, this is easy to prove (in Appendix \ref{App:proof_achievable}).

\begin{theorem}[Compatibility and Scalability of Cross Wiring]\label{theorem:all logical topology achievable}
For any logical topology $C=[C_{ij}]$, it is compatible with Cross Wiring with $\psi=1$.
\end{theorem}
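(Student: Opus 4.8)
The plan is to realize $C$ constructively by reducing to the two mirror-symmetric halves that define Cross Wiring and then chaining the decomposition theorems already established. Recall from the construction in \S\ref{sec:lumoscore} that Cross Wiring splits the physical topology into two chiral-isomorphic sub-topologies: $G_0$ built from the even-indexed OCSes and $G_1$ built from the odd-indexed ones, with the $(2\times k+1)$-th OCS wired as the transpose of the $(2\times k)$-th, i.e. $G_0 = G_1^{T}$. Within $G_0$, every EGroup contributes exactly one transmit port and one receive port to each of the $K_{\text{egroup}}/2$ even OCSes, so an EGroup can originate at most $K_{\text{egroup}}/2$ links and terminate at most $K_{\text{egroup}}/2$ links inside $G_0$. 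The goal is therefore to produce an assignment on $G_0$ (and its mirror image on $G_1$) whose per-EGroup out- and in-degrees respect this $K_{\text{egroup}}/2$ budget while the two halves together sum to the target $C$.

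First I would apply Theorem \ref{lem:matrix_decomp} to the symmetric integer matrix $C$ to obtain an integer matrix $A$ with $C = A + A^{T}$ and marginals bounded by $\lceil (\sum_j C_{ij})/2\rceil$ and $\lceil (\sum_i C_{ij})/2\rceil$. Combining these bounds with the fan-out constraint \eqref{eqn:logical_topology_egress_constraint} and the symmetry \eqref{eqn:logical_topology_symmetric_constraint} of $C$ yields $\sum_j A_{ij} \leq \lceil K_{\text{egroup}}/2\rceil = K_{\text{egroup}}/2$ and likewise $\sum_i A_{ij} \leq K_{\text{egroup}}/2$; since $A$ is nonnegative, each entry also satisfies $A_{ij} \leq K_{\text{egroup}}/2$. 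I would then invoke Theorem \ref{lem:minirewir} with $K = K_{\text{egroup}}/2$ to write $A = x^{(1)} + \dots + x^{(K)}$, interpreting $x^{(k)}$ as the configuration of the $k$-th even OCS. The entrywise and marginal bounds of that theorem, together with $A_{ij},\sum_j A_{ij},\sum_i A_{ij}\leq K$, force $x_{ijk}\in\{0,1\}$ with row and column sums at most one, so each even OCS is programmed as a valid partial matching and $A$ is realized on $G_0$ under the relaxed (non-symmetric) constraints.

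The final step exploits the chiral isomorphism $G_0 = G_1^{T}$: configuring the $(2\times k+1)$-th OCS as the transpose of the $k$-th even OCS realizes $A^{T}$ on $G_1$, and merging via $x_{mn}=x_{mn}^{sub1}+x_{mn}^{sub2}$ gives a configuration that is symmetric by construction (so \eqref{equ:intro5} holds) with link count $A_{ij}+A_{ji}=C_{ij}$ between EGroups $i$ and $j$. This is precisely the mechanism of Theorem \ref{lem:polynomial-solvable}, so full \textsf{Logical Topology Compatibility} follows, while full \textsf{Cluster Scalability} is immediate since each port occupies only half an OCS port on each of two paired OCSes, giving $\psi=1$. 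The main obstacle I anticipate is the port-budget bookkeeping in the second paragraph: I must verify that the ceiling-rounded marginals from Theorem \ref{lem:matrix_decomp} never exceed the $K_{\text{egroup}}/2$ capacity of a single half, and that the entrywise bound keeps every $x_{ijk}$ binary, which is exactly where the fan-out constraint, the symmetry of $C$, and the parity of $K_{\text{egroup}}$ must be reconciled carefully.
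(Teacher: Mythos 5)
Your proof is correct and follows essentially the same route as the paper's own proof: apply Theorem \ref{lem:matrix_decomp} to write $C = A + A^{T}$, combine the ceiling-bounded marginals with the fan-out constraint \eqref{eqn:logical_topology_egress_constraint} and the evenness of $K_{\text{egroup}}$ to get $\sum_j A_{ij}, \sum_i A_{ij} \leq K_{\text{egroup}}/2$, then invoke Theorem \ref{lem:minirewir} with $K = K_{\text{egroup}}/2$ to split $A$ into per-OCS partial matchings on the even group and mirror the configuration as $A^{T}$ on the odd group. The only cosmetic difference is your additional entrywise bookkeeping ($A_{ij} \leq K_{\text{egroup}}/2$ forcing $x_{ijk} \in \{0,1\}$), which the paper obtains directly from the marginal bounds $\lceil \sum_j A_{ij}/(K_{\text{egroup}}/2)\rceil = 1$ alone.
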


\subsection{ToE Polynomial Solvability}

When demonstrating Th. \ref{lem:polynomial-solvable}, we briefly discussed how to decompose and solve the ToE problem. Specifically, for Cross wiring schemes, we decompose both the logical topology and the physical topology, and employ the MCF model to solve the sub-ToE problem on the decomposed sub-topologies (scheme in Appendix \ref{App:OCS config}). Given the sub-problem solutions $\mathbf{x}^{\text{sub}} = [x_{ijk}^{\text{sub}}]$, the final solution $\mathbf{x} = [x_{ijk}]$ can be constructed as follows:

\begin{equation}\label{OCS_eq_1}
    \mathbf{x}[j,i, 2\times k^*-1] = \mathbf{x}[i,j,2\times k^*]= x_{ijk}^{\text{sub}}
\end{equation}



\paragraph{Time Complexity.} We consider a cluster containing $P$ EGroups. The ToE algorithm involves three steps. In step 1, the logical topology $C$ is decomposed into $A$ and $A^T$ using an MCF model, if we use the cost-scaling algorithm \cite{kiraly2012efficient}, its time complexity will be $O(P^6\log P)$. In step 2, $x_{mn}^{sub1}$ is calculated given $A$ and $G_0$ and the time complexity is $O(P^4\log P)$. In step 3, the solution of two sub problems are merged  and the time complexity is $O(P^2K_{egroup})$. Since $K_{egroup}\leq P^4$, the overall time complexity is $O(P^6\log P)$. 

Subsequent experiments also demonstrated such efficiency. As we'll show in \S\ref{exp:given_logo}, benefiting from its polynomial-time complexity, even for a large-scale network of 32k nodes, ToE requires a maximum of $31.22\mathrm{s}$ only, which is significantly faster in magnitudes than methods based on ILP.

\takeaway{Takeaway.} Consistent with our analyses in \S\ref{sec:lumoscore}, we proved that Cross Wiring is optimal, i.e. satisfying all three features: full \textsf{\textbf{Logical topology compatibility}}, full \textsf{\textbf{Cluster scalability}}, and \textsf{\textbf{ToE polynomial solvability}}. 

\section{Online ToE Feasibility}\label{sec:ocs_reconfiguration}


For OCS clusters, performing ToE online is also gaining research focus. Since network traffic continues to flow during the ToE process, online logical topology changes introduces additional constraints. For instance, existing works~\cite{zhao2019minimal} aim at minimizing the number of links affected per change (known as \textit{MinRewiring}), which avoid service interruptions and ensure QoS. Such optimizations generally requires an ILP strategy. Using this scenario as an example, we demonstrate that Cross Wiring still achieves superior performance in online ToE.

For this \textit{MinRewiring} target, similar to the reduction technique of offline ToEs in proving Th. \ref{lem:matrix_decomp}, we introduce a polynomial time merge-decomposition MCF (MDMCF) algorithm (cf. Alg. \ref{algorithm:main} in Appendix). This algorithm can achieve optimality in specific scenario when $N_{\text{OCS}} = 2$, and still feasible for general cases, by approximating with optimal subproblems using a divide-and-conquer scheme.


For the scenario when $N_{\text{ocs}} =2$, this algorithm can achieve optimality, by constructing MCF structure similar when we prove Th. \ref{lem:matrix_decomp} (prove in Appendix \ref{app:proof_ocs_optim}):

\begin{theorem}[MDMCF is Optimal when $N_{\text{ocs}} =2$]\label{th:MDMCF_optimal}
    Merge-decomposition MCF is optimal solving MinRewiring ToE when $N_{\text{ocs}} =2$.
\end{theorem}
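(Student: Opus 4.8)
The plan is to reduce MinRewiring to a single minimum-cost-flow (MCF) instance whose integral optimum is exactly what MDMCF computes, exploiting that with $N_{\text{ocs}}=2$ the two OCSes form a single Cross-Wiring pair, so a configuration is fully determined by one symmetric decomposition. First I would reformulate the objective. Fix the old configuration (realizing some $C^{\mathrm{old}}$) and the target $C^{\mathrm{new}}$, and let $S$ be the number of links common to the old and new configurations, with $L_{\mathrm{old}},L_{\mathrm{new}}$ their link counts (both fixed once $C^{\mathrm{old}},C^{\mathrm{new}}$ are given). The number of rewired links is $L_{\mathrm{old}}+L_{\mathrm{new}}-2S$, so minimizing rewiring is equivalent to maximizing the overlap $S$ over all valid configurations realizing $C^{\mathrm{new}}$. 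This turns MinRewiring into a linear objective, exactly the form MCF can optimize.

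Second, I would argue that for $N_{\text{ocs}}=2$ every valid configuration is in bijection with a single symmetric decomposition $C^{\mathrm{new}}=A+A^{T}$: OCS $0$ realizes $A$ and its mirror OCS $1$ realizes $A^{T}$, specializing Th.~\ref{lem:polynomial-solvable} to one pair. Because the two OCSes constitute the only Cross-Wiring pair, there is no further freedom in assigning demand across OCS pairs. This is precisely why $N_{\text{ocs}}=2$ is the tractable case and why a single MCF suffices, whereas for $N_{\text{ocs}}>2$ the cross-pair assignment reintroduces hard combinatorics that only divide-and-conquer can approximate.

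Third, I would reuse the MCF network \texttt{DecomOPT} from the proof of Th.~\ref{lem:matrix_decomp} verbatim (supply nodes $C_{ij}$, intermediate nodes $A_{ij}$, demand nodes $\sum_j A_{ij}$, dummy node), but attach a cost to the edge carrying $A_{ij}$: cost $0$ when the corresponding directed link is already present in the old configuration, cost $1$ otherwise. Then the total flow cost equals the number of newly placed links, i.e. $L_{\mathrm{new}}-S$, so minimizing cost maximizes $S$ and hence minimizes rewiring. The feasibility structure is unchanged, so the bijection between integral flows and valid reconfigurations from Th.~\ref{lem:matrix_decomp} still holds; the integrality lemma~\cite{wolsey2020integer} guarantees an integral optimum, obtainable in polynomial time by cost-scaling. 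Since this integral optimum is exactly what the merge-then-decompose steps of MDMCF produce, MDMCF attains the minimum possible rewiring and is therefore optimal.

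The hard part will be the bookkeeping in the cost assignment. Under the symmetry constraint each logical link corresponds to a Tx$\to$Rx connection on OCS $0$ together with its mirror on OCS $1$, so one unit of $A_{ij}$ simultaneously fixes two mirrored physical links; I must verify that the per-edge cost counts kept and changed links consistently (without double counting), so that the MCF objective equals the true rewiring count whether MinRewiring is measured on physical or on logical links. I also need to confirm that the merge step of MDMCF — recombining the old configuration's two sub-topologies and feeding the resulting old-link indicators in as edge costs — reproduces exactly this single MCF, so that what the algorithm does and what the MCF optimizes provably coincide.
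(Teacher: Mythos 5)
Your high-level plan---recast MinRewiring as a single min-cost-flow instance and invoke integrality---is the same as the paper's, but two concrete steps in your construction fail. First, the cost assignment. You attach a flat $0/1$ cost to ``the edge carrying $A_{ij}$,'' but the old configuration is a link \emph{count} $u_{ij}$, not a binary presence indicator: if the old configuration had $u_{ij}=3$ links and the new flow is $5$, the rewiring charge must be $2$, which no single per-unit cost on one edge can express. This is precisely the ``bookkeeping'' you defer to the end, and it is the crux of the paper's proof (Appendix~\ref{app:proof_ocs_optim}): after eliminating $x_{ij2}=A_{ij}-x_{ij1}$, the per-pair objective $f_{ij}(x)=(u_{ij1}-x)^+ + (u_{ij2}-A_{ij}+x)^+$ is convex piecewise-linear, and each linear piece is modeled by a \emph{separate parallel arc} from $s_i$ to $d_j$ whose cost is the slope and whose capacity is the segment length; convexity is what guarantees the MCF fills these arcs in slope order, so that flow cost equals the true rewiring count. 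Without this parallel-arc device your MCF objective simply does not compute the quantity you want to minimize.

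Second, reusing \texttt{DecomOPT} ``verbatim'' is not merely incomplete but breaks optimality: its demand-node capacities are the balance bounds $\left[\lfloor \sum_j C_{ij}/2\rfloor,\ \lceil \sum_j C_{ij}/2\rceil\right]$, which force a near-balanced split, whereas MinRewiring only requires the true port-capacity constraints and its optimum can be unbalanced. Concretely, with $P=2$, $C_{12}=C_{21}=2$, per-OCS capacity $2$, and an old configuration $A^{\mathrm{old}}_{12}=2$, $A^{\mathrm{old}}_{21}=0$, the optimal move is to change nothing, but your network forces $A_{12}=1$ and rewires a link needlessly. The paper avoids this by using supply/demand intervals $[\sum_j A_{ij}-G_{i,2},\ G_{i,1}]$ derived from the actual per-OCS port counts, not balance. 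Finally, your framing of the $N_{\text{ocs}}=2$ case does not match the algorithm the theorem is about: in the paper, the base case of MDMCF is the subproblem \emph{after} the symmetric decomposition has been applied, namely splitting a (generally asymmetric) sub-logical topology $A$ across two merged OCS groups via $x_{ij1}+x_{ij2}=A_{ij}$; it is not the choice of the symmetric decomposition of $C$ for a single mirror pair. So your claimed bijection with decompositions $C=A+A^T$, and the closing assertion that your MCF ``is exactly what the merge-then-decompose steps of MDMCF produce,'' are not established and do not describe the algorithm's actual base case.
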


For general cases with $ N_{\text{ocs}} > 2 $, this algorithm still ensures full Logical topology compatibility by may not achieve optimal \textit{MinRewiring} target. We can merge multiple OCSes into a single, larger OCS such that the physical topology appears as if it contained only two OCSes. To obtain a real feasible solution, the resulting solution must be decomposed across the aggregated OCSes, which involves recursively solving two subproblems. We prove the feasibility of this merge-decomposition principle in Appendix \ref{app:proof_ocs_geq_2}.

\section{Micro Benchmark}\label{sec:evn_log}
\subsection{Evaluation Metrics}
Our experiment predominantly employs three evaluation metrics. Initially, to evaluate \textsf{Logical topology compatibility}, a metric \emph{LTCR} is defined in \eqref{eq:LTCR} which measures the similarity between the logical topology achieved through topology engineering and the designated input logical topology. Secondly, our analysis appraises the \emph{computational overhead} incurred by topology engineering, examining the overhead across diverse physical topologies and topology engineering algorithms, which reflects \textsf{ToE Polynomial Solvability}. Finally, the \emph{MinRewiring Achievement Rate} (MRAR) to evaluate \textsf{Online ToE}, as in Def. \ref{def:mrar},  evaluates adherence to the optimization objectives outlined by \emph{MinRewiring} across different methodologies, in consideration of the newly proposed OCS configuration $x$ alongside the existing OCS configuration $u$.

\begin{definition}[MinRewiring Achievement Rate (MRAR)]\label{def:mrar}
\begin{equation}\label{mra_rate}
MRAR = 
{
1 - \frac{\sum_{i,j,k} \mathbb{I}_{u_{ijk} < x_{ijk}} \cdot ( x_{ijk} -  u_{ij})}{\sum_{ijk}  x_{ijk}}
} 
\end{equation}
\end{definition}

\subsection{Baseline Methods}
We conduct a comparative analysis of two distinct architectural paradigms: the proposed \emph{Cross Wiring} (ILW) architecture and the widely adopted \emph{Uniform}. Our evaluation encompasses six configuration strategies, each characterized by different ToE algorithms: \emph{ILW-MDMCF}, \emph{ILW-MCF}, \emph{ILW-ILP}, \emph{Uniform-Heuristic}, \emph{Uniform-ILP}, and \emph{Helios}~\cite{farrington2010helios}. Notably, \emph{MDMCF} represents the polynomial-time algorithm we introduced in \S\ref{sec:ocs_reconfiguration}; MCF refers to the polynomial-time algorithm delineated in \emph{MinRewiring}~\cite{zhao2019minimal}; and \textit{Uniform-Heuristic} is the heuristic algorithm based on Birkhoff-vonNeumann(BvN) detailed in \cite{0On}. In parallel, we assessed \emph{Helios}~\cite{farrington2010helios}, a method employing bipartite graph matching of traffic features for ToE.




\takeaway{Remark:} We do not directly evaluate \textsf{Cluster Scalability}, nor do we compare \emph{\textbf{Dual-link Uniform Wiring}}. This is because scalability can be directly confirmed through design and is primarily reflected in networking costs. Clearly, Dual-link scheme has scalability cut in half compared to others.

\subsection{Evaluation Results}\label{exp:given_logo}
The initiation of our analysis involves a simulation-based evaluation that generates 100 temporally successive logical topologies across clusters of varying scales, with each Pod comprising 256 ports. In order to rigorously validate the efficacy of different strategies, a heavy workload scenario is examined wherein each logical topology takes full advantage of all available ports within each Pod.

\begin{figure}[h]
    \centering
    \includegraphics[width=1\linewidth]{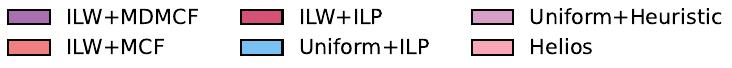}
    \begin{subfigure}[b]{0.44\linewidth}
     \includegraphics[width=\linewidth]{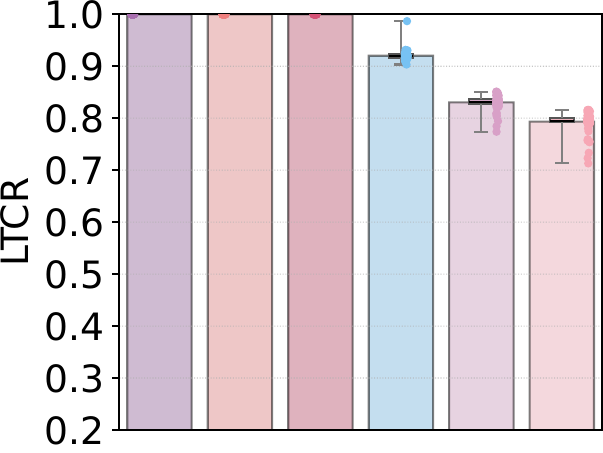}
     \caption{Cluster Size = 8192}\
    \end{subfigure}
    \hfill
    \begin{subfigure}[b]{0.44\linewidth}
     \includegraphics[width=\linewidth]{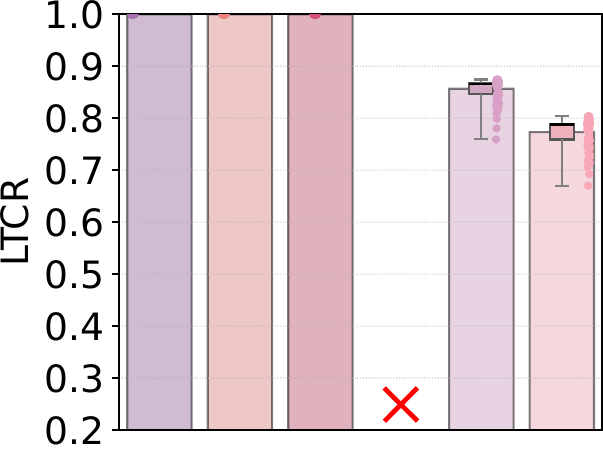}
     \caption{Cluster Size = 32684}\
    \end{subfigure}
\caption{Cross Wiring can ensure an optimal Logical Topology Realization Rate.}\label{LTR}
\end{figure}

\begin{figure}[h]
    \centering
    \includegraphics[width=1\linewidth]{images/pdf/new_sim/Realization_Rate_Legend.pdf}
    \begin{subfigure}[b]{0.44\linewidth}
     \includegraphics[width=\linewidth]{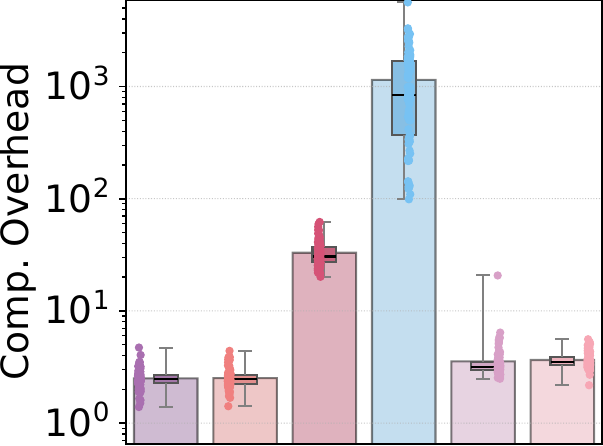}
     \caption{Cluster Size = 8192}
    \end{subfigure}
    \hfill
    \begin{subfigure}[b]{0.44\linewidth}
     \includegraphics[width=\linewidth]{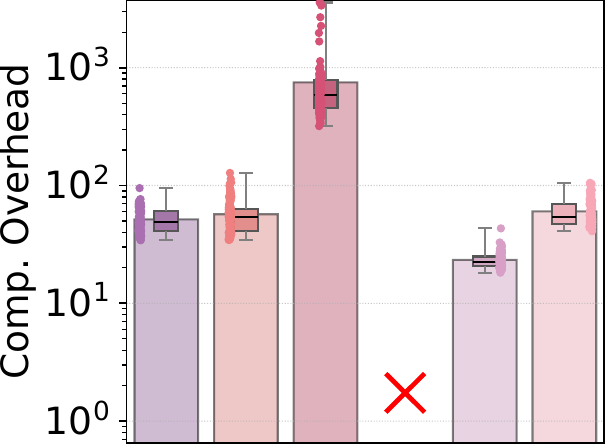}
     \caption{Cluster Size = 32684}
    \end{subfigure}
\caption{At a 32k scale, the computational overhead of Uniform-ILP is too high to be measured.}\label{TC}
\end{figure}

\begin{figure}[h]
    \centering
    \includegraphics[width=1\linewidth]{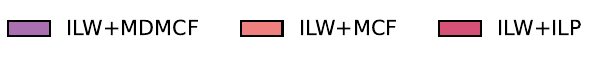}
    \begin{subfigure}[b]{0.44\linewidth}
     \includegraphics[width=\linewidth]{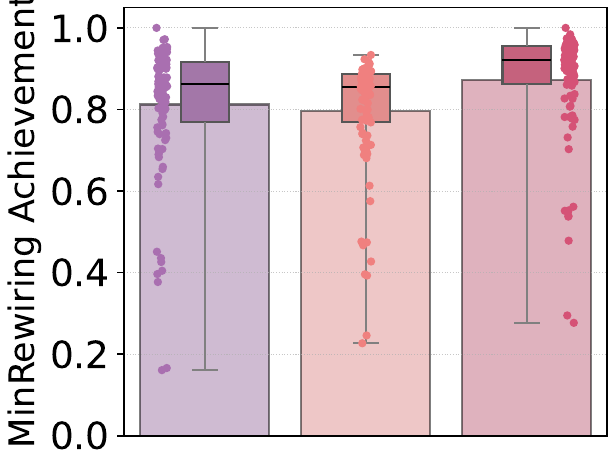}
     \caption{Cluster Size = 8192}
    \end{subfigure}
    \hfill
    \begin{subfigure}[b]{0.44\linewidth}
     \includegraphics[width=\linewidth]{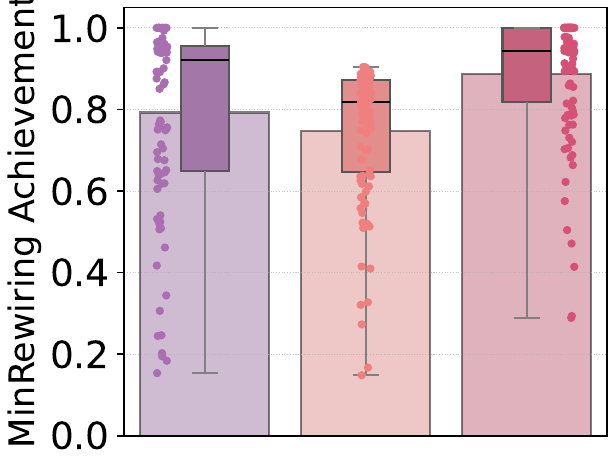}
     \caption{Cluster Size = 32684}
    \end{subfigure}
\caption{The achievement rate of the MinRewiring objective in MDMCF is stronger than that in MCF.}\label{MA}
\end{figure}

It is noteworthy that at a 32k scale, the computational overhead associated with Uniform-ILP is prohibitively large, rendering measurement infeasible.


\takeaway{On \textsf{Logical Topology Compatibility}} Fig.\ref{LTR} illustrates the average value and distribution of \emph{LTCR} under different strategies. The \emph{LTCR} for ILW-based strategies is consistently equal to 1, consistent with ILW's full compatibility guarantee. When using the widely adopted \textit{Uniform} physical topology, the \emph{LTCR} value may be as low as 0.903 even with the ILP strategy. We'll further demonstrate in \S\ref{sec:simulation} that low \emph{LTCR} can significantly impact the \textbf{MLU} of the DCN cluster and the \textbf{training throughput} in the AI cluster.

\takeaway{On \textsf{ToE Polynomial Solvability}} Fig.\ref{TC} illustrates the computational overhead of topology engineering under different strategies. Compared to ILW-MDMCF, ILW-ILP may lead to computational overheads that are up to \textbf{27.96 times} greater. As the reader will see, in multi-tenant AI clusters requiring task-level reconfiguration, this may become a performance bottleneck. At a 32k scale, the computational cost of Uniform-ILP is \textbf{too high to be measured}. 

Interestingly, the computational cost of ILW-ILP is significantly lower than that of Uniform-ILP. This is partly because ILW breaks down the original problem into subproblems, thereby reducing the problem scale, and partly because ILW can ensure the optimal \emph{LTCR}, thus the objective function does not need to optimize \emph{LTCR}. Although heuristic-based strategies like Helios can solve ToE in polynomial time, they cannot guarantee the optimality of \emph{LTCR}.

\takeaway{On \textsf{Online ToE}} As shown in Fig.\ref{MA}, we present the MRAR, considering only ILW-based configurations since Uniform-based configurations may not form eligible logical topologies for comparison. The results demonstrate that under the ILW physical topology, the MRAR achieved by MDMCF is close to that of the ILP and better than that of the MCF, highlighting the superiority of our algorithm design.

\section{Testbed Evaluations}\label{section:testbed}
\begin{sloppypar}
To validate our design, we build a prototype cluster with 128 Ascend 910A NPUs, where each server's 8 GPUs are interconnected via 56 Gbps HCCS, and GPUs in different servers are connected by a 100 Gbps RoCE network. Due to the limited cluster size, we used Virtual Routing and Forwarding (VRF) to virtualize each switch into multiple logical switches, ensuring 2 links between each virtual leaf and spine within a pod. Logically, the cluster comprises 4 pods, each containing 4 virtual leaves ($K_{\text{leaf}} = 8$) and 4 virtual spines ($K_{\text{egroup}} = 8$). Its physical and logically equivalent architecture are shown in Appendix Fig.\ref{fig:logical_arch}.


We employ a source UDP port-based hashing routing \cite{HAN2025111285}, which ensures that flows are distributed as evenly as possible across multiple paths at each hop for AI workload, thereby mitigating bandwidth contention. The routing tables are generated using the BGP protocol. For performance validation, we also constructed a conventional \textit{leaf-spine} cluster consisting of 8 leaf switches and 8 spine switches, which serves as an \emph{optimal} baseline for comparison. 

Our testbed experiments compare three architectures: leaf-spine, \emph{Cross Wiring}, and the SotA \emph{Uniform Wiring}. The configuration of ML task relevant parameters such as TP (Tensor Parallelism), EP (Expert Parallelism), DP (Data Parallelism), and PP (Pipeline Parallelism) will also be applied to large-scale simulation experiments.

\takeaway{Static Scenario Analysis.}
We start by a static topology containing 3 Pods and 96 NPUs. We configured Pangu-$\alpha$ and GPT2 with $TP=8, PP=2, DP=6$, and $EP=2$. 

In this setup, \emph{Uniform} faces 2-flow contention due to the existence of unrealizable logical topology. The results show that \emph{ILW} achieves up to 28.3\% end-to-end throughput improvement due to a low \emph{LTCR} of 0.667 in \emph{Uniform}.

\begin{figure}
     \begin{subfigure}[b]{0.49\linewidth}
     \includegraphics[width=\linewidth]{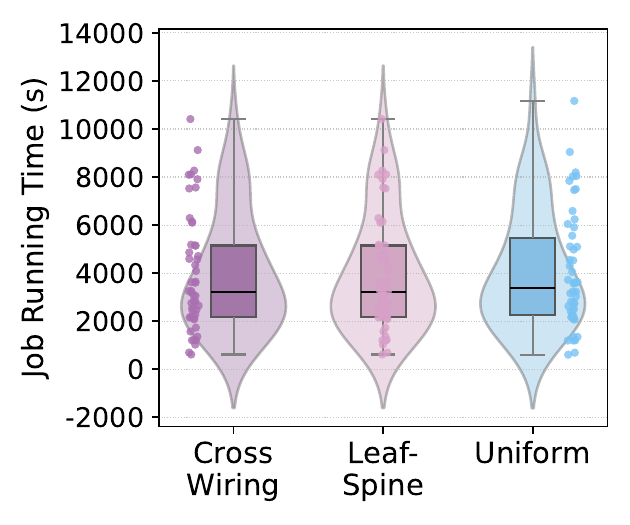}
    \caption{The reconfiguration overhead has a negligible impact on the training efficiency.}
    \label{fig:pdf_of_JRT}
     \end{subfigure}
     \begin{subfigure}[b]{0.49\linewidth}
     \includegraphics[width=\linewidth]{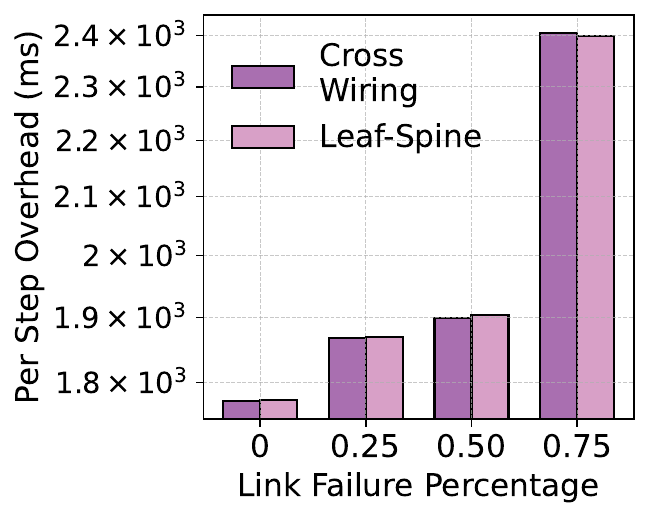}
    \caption{The physical topology design in Cross Wiring is link fault-tolerant as Leaf-spine.}
    \label{fig:link_failure_ratio}
     \end{subfigure}
     \caption{The testbed results demonstrate the superiority of Cross Wiring. }
\end{figure}

\takeaway{Dynamic Scenario Analysis.}
We then tested a 48-hour trace including 50 jobs, using MindSpore 2.2.0 and MindFormers 1.0.0 as the training frameworks~\cite{tong2021study}. The models included Llama-7B, Llama2-7B, Llama2-13B, Pangu-$\alpha$-6B, and GPT2-13B, with the number of GPUs per task $N$ is randomly selected from $\{16, 32, 64, 96, 128\}$. During training, we confine EP/TP traffic within the Pod, with $TP=8$ and $PP$ randomly chosen from $\{1, 2, ..., \frac{N}{8}\}$, and $DP=N/(PP*TP)$. For Pangu-$\alpha$ and GPT2, we set $EP=2$. The hierarchical Ring~\cite{tong2021study} was selected as the communication algorithm. 

Fig.~\ref{fig:pdf_of_JRT} shows that \emph{Cross Wiring} reduces average job running time by 3.9\%, with a maximum reduction of 22.1\%. Moreover, the performance gap between \emph{Cross Wiring} and leaf-spine remains within 1\%. 

\takeaway{Robustness Analysis.}
We also test the thoughput under varying link failure rates using a 96-NPU llama2(13b) task. By altering the configured VRF, we simulate link failures through port shutdowns on leaves. The configured routing policy still distributes flows evenly at each hop with link failure.

Fig.~\ref{fig:link_failure_ratio} shows that \emph{Cross Wiring} is fault-tolerant comparable to the leaf-spine under different link failure rate with a lower network cost. In Appendix \ref{App:testbed}, we provide additional information on the impact of OCS reconfiguration on BGP convergence and training throughout.

\end{sloppypar}
\section{Simulation Experiment}\label{sec:simulation}
\subsection{Simulation for DCN Workloads}
OCS-based DCN clusters have been deployed in industry~\cite{poutievski2022jupiter}. A significant area of research in DCN is traffic engineering (TE). However, deploying these strategies in OCS-based clusters may present a challenge because the expected logical topology may not be realizable. 

Using the Facebook-Hadoop traffic~\cite{TROD_GitHub} containing 9 Pods as a case study, we assume the existence of a full-mesh logical topology, where \emph{Uniform} achieves an \textbf{\emph{LTCR} of 0.889} using \textbf{Uniform-ILP}. We perform Traffic Engineering minimizing the MLU using linear programming on a logical topology generated by OCS reconfiguration.  Fig~\ref{mlu_TE} shows that lower \emph{LTCR} increases MLU by average 8.56\% under \emph{Uniform}. Compared with \textit{Uniform}, \emph{ILW} improves the MLU by up to \textbf{12.36\%}. 


Another branch of related work involves co-designing traffic-aware logical topologies alongside traffic engineering. Obviously, low \textsf{Logical Topology Compatibility} can harm its effectiveness. Take \emph{COUDER}~\cite{2020COUDER} as an example, Fig~\ref{mlu_TPE} shows that \textit{Uniform} results in MLU degradation by up to \textbf{14.9}\%. 


\takeaway{Discussion:} Admittedly, not all workloads are sensitive to physical topology. For instance, the Facebook database traffic and web traffic~\cite{TROD_GitHub} which comprises less Pods, result in an unrealizable logical topology with a probability of less than 1\%. Nevertheless, when such cases arise, they can still increase MLU by up to 6.77\%. An optimized physical topology effectively \textbf{decouples logical topology design from ToE}. This means when performing tasks such as TE or ToE, there is no need to worry whether a traffic allocation or logical topology can be implemented on actual OCS-based clusters. 


\begin{figure}[ht]
    \centering
    \begin{subfigure}[b]{0.42\linewidth}
     \includegraphics[width=\linewidth]{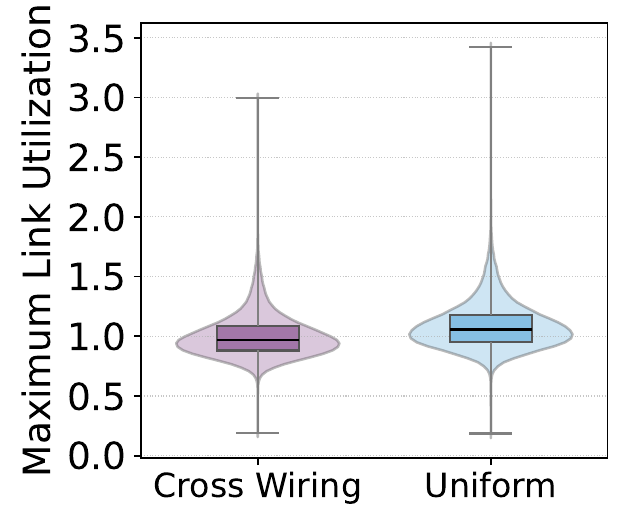}
    \caption{TE with a fullmesh logical topology.}\label{mlu_TE}
    \end{subfigure}
    \begin{subfigure}[b]{0.42\linewidth}
     \includegraphics[width=\linewidth]{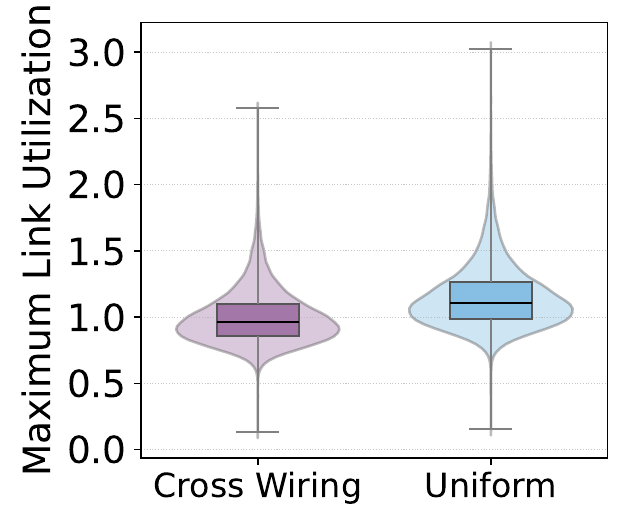 }
    \caption{TE with traffic aware logical topologies.}\label{mlu_TPE}
    \end{subfigure}
\caption{MLU with normalization shows optimal physical topologies is foundational for OCS-based DCN clusters.}\label{mlu}
\end{figure}

\section{Conclusion} 
In this paper, we discuss the design methodology for optimal physical topologies and proposed a new physical topology scheme called Cross Wiring. Cross Wiring simultaneously achieves full \textsf{\textbf{Logical Topology Compatibility}}, full \textsf{\textbf{Cluster Scalability}}, and \textsf{\textbf{ToE polynomial solvability}}. Through testbed experiments and large-scale simulations, we validate the feasibility and superiority of the Cross Wiring.




\clearpage
\bibliographystyle{plain}
\bibliography{sample-base}
\clearpage
\appendix
\newpage
\section*{Appendix}

\begin{figure}[htbp]
    \centering
     \includegraphics[width=0.8\linewidth]{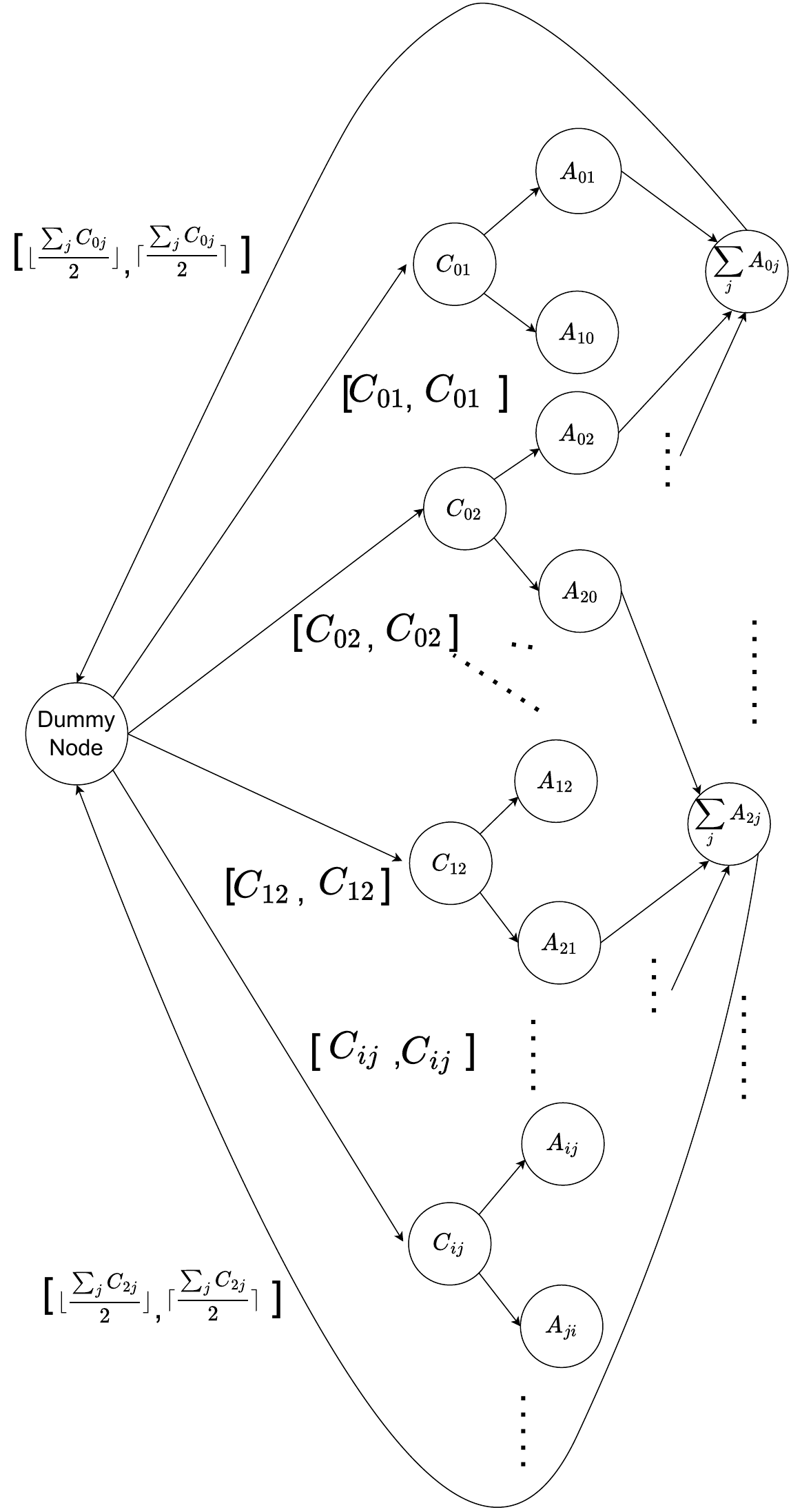}
     \caption{The Equivalent MCF model for Symmetric Matrix Decomposition Theorem.}\label{fig.sym_matrix_decomp} 
\end{figure}

\begin{figure}[!htbp]
    \centering
    \includegraphics[width=0.8\linewidth]{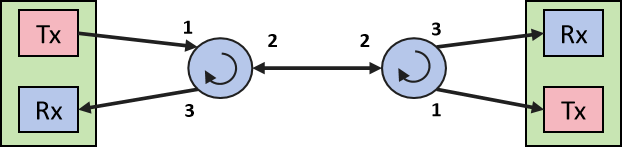}
    
    \textit{\footnotesize ignals entering Port 1 are routed to Port 2, and signals entering Port 2 are directed to Port 3.}
    \caption{An example to explain Circulators}\label{fig:circulator}
\end{figure}

\begin{table*}[tbp]
\centering
\caption{Detailed Comparison of Different Physical Topology Designs}\label{table:physical_topology}

\textit{This is the extended version of Table \ref{table:physical_topology_small} in the body.}
\small
\begin{tabular}{l|c|c|c}
\toprule
\textbf{Physical Topology} & \textbf{Logical Topology Compatibility} & \textbf{Cluster Scalability} & \textbf{ToE Polynomial-Solvability} \\
\midrule
Uniform Wiring~\cite{wang2022topoopt,9651977,7877093,10892202,dong2025risk} & Partial-Compatibility & $\frac{K_{\text{ocs}}}{\psi} \times K_{\text{egroup}}$ & NP-Complete \\
\hline
Dual-link Uniform Wiring~\cite{poutievski2022jupiter,zu2024resiliency} & Full-Compatibility & $\frac{K_{\text{ocs}}}{\psi} \times K_{\text{egroup}}/2$ & Polynomial \\
\hline
Cross Wiring & Full-Compatibility & $\frac{K_{\text{ocs}}}{\psi} \times K_{\text{egroup}}$ & Polynomial \\
\bottomrule
\end{tabular}
\end{table*}






\section{Proof of Theorem \ref{theorem:uniform_suboptimal}} \label{App:proof_uniform}

\begin{proof}
We consider a certain case where there exists at least a pair of $(E_i,E_j)$ so that $E_i$ and $E_j$ need at least one link ($x_{ij} \geq 1,\exists i,j $). Whenever this condition holds, there must exist $i,j \in V$ satisfying the connectivity requirement:

\begin{equation}
    \sum_{k} x_{ijk} \geq 1, \quad \forall_{i,j \in V} \label{new_eq5}
\end{equation}

By formulating a graph transformation wherein $K_{\text{egroup}}$ denotes the size of the color palette and each undirected EGroup pair connection $(E_i,E_j)$ is represented as a virtual node, as implied by Eqs. \eqref{new_eq4}, we establish a correspondence with the multi-coloring problem, as also utilized in prior research~\cite{10892202,garey1974some}. Virtual links are established between two virtual nodes precisely when their associated EGroup pair connection $(E_i,E_j)$ shares common endpoints. Within this framework, the constraint system \eqref{new_eq4_5}-\eqref{new_eq5} delineates the generalized multi-coloring requirements.
\begin{itemize}
    \item Each virtual node receive at least one color (Eq. \eqref{new_eq5});
    \item Adjacent nodes require distinct colors (Eqs. \eqref{new_eq4_5}-\eqref{new_eq3_5}). 
\end{itemize}

Then the problem is transformed to determining the existence of a valid coloring scheme using no more than $K_{\text{egroup}}$ colors, which is a classical NP-complete problem~\cite{halldorsson2004multicoloring}. So under $Uniform$, ToE problem is NP-complete.
\end{proof}

\section{Proof of Theorem \ref{theorem:all logical topology achievable}}\label{App:proof_achievable}

\begin{proof}
Since Cross Wiring has $\psi=1$, it suffices to prove that the logical topology $C$ is compatible.

For logical topology $C$, (\ref{eqn:logical_topology_symmetric_constraint}) indicates that it is a symmetric integer matrix. Hence, according to Theorem \ref{lem:matrix_decomp}, there must exist an integer matrix $A_h$, such that $C=A+A^T$ and
$$\sum_j A_{ij} \leq \left\lceil \textstyle \sum_j C_{ij}/2\right\rceil, \sum_i A_{ij} \leq \left\lceil \textstyle \sum_i C_{ij}/2\right\rceil.$$
Combined with (\ref{eqn:logical_topology_egress_constraint}), noted the fact that $K_\text{egroup}$ is even, it is easy to verify that 
\begin{equation}\label{proof:egress_ub}
\sum_{j}A_{ij}\leq K_{\text{egroup}}/2,
\end{equation}
\begin{equation}\label{proof:ingress_ub}
\sum_{i}A_{ij}\leq K_{\text{egroup}}/2,
\end{equation}

Note that Cross wiring divides the OCS into an even group and an odd group, each containing $K_{\text{egroup}}/2$ OCSes. We prove that the integer matrix $A$ is compatible with the even group. According to Theorem \ref{lem:minirewir}, we can decompose $A$ into $K_{\text{egroup}}/2$ integer matrices $A^{(k)}$,$k=1,2,...,K_{\text{egroup}}/2$, such that $$A=A^{(1)}+A^{(2)}+\cdots+A^{(K_{\text{egroup}}/2)},$$
and 
$$\sum_{j}A_{ij}^k\leq\left\lceil \textstyle \sum_{j}A_{ij}/(K_{\text{egroup}}/2)\right\rceil=1, \text{( according to (\ref{proof:egress_ub}))}$$
$$\sum_{i}A_{ij}^k\leq\left\lceil \textstyle \sum_{i}A_{ij}/(K_{\text{egroup}}/2)\right\rceil=1. \text{( according to (\ref{proof:ingress_ub}))}$$

Note that each EGroup connects to exactly one $OCS$ $port$ of each OCS in the corresponding group. The above two inequalities guarantee that each $A^{(k)}$ is realizable in each OCS. Therefore, $A$ is realizable in the $K_{\text{egroup}}/2$ OCSes in the even group. Since these two groups are mirrored, the proof for the odd group follows similarly.
\end{proof}

\section{Slim Dual-link Uniform Wiring}\label{App:slim-dual}

Observing the connection scheme of \textit{Dual-link Uniform Wiring} with circulators (Fig. \ref{fig:dual}), we can see the up and down part is symmetric, as circulators provided us more symmetry. We can perform a smaller decomposition, just over the transceiver, which involves pairing the N port and S port of the same OCS, as illustrated in Fig.~\ref{fig:skim-dual}. 

\begin{figure}[!h]
    \centering
    
     \includegraphics[width=\linewidth]{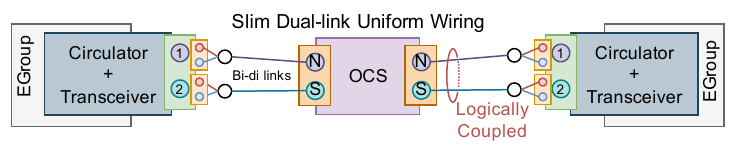}

    \caption{Slim Dual-link Uniform Wiring}\label{fig:skim-dual}
\end{figure}

If decomposed in this scheme, it logically requires that the pair of bi-directional links be coupled: they must be connected to a same transceiver on the other side. Although Theorem \ref{lem:polynomial-solvable} guarantees the compatibility of any logical topology, the granularity of the \textit{logical topology} here is $2\times$ coarser. Comparing on the same granularity corresponding to \textit{Dual-link Uniform Wiring} and \textit{Cross Wiring}, the compatibility is limited.

If we do not couple these two pairs, this is equivalent to \textit{Uniform Wiring}. Then there may exist unrealizable logical topologies: under-utilization of multiple ports within a single transceiver can occur. The \textsf{Polynomial Solvability} is also compromised, which leads to slow ToE computations.

This scheme must be deployed with circulators, and as discussed in \S\ref{sec:components}, the constraints brought with circulators, like insertion loss, extra cost, etc. should also be considered. If deploying such scheme, network designers must take these limitations and constraint into account, to appropriately select and balance through transceiver/OCS models, communication protocols, and scheduling algorithms.

\section{Detail in OCS Configuration}\label{App:OCS config}

    
    

    
    

\subsection{Problem Formulation for subproblem after applying Symmetric Integer Matrix Decomposition
Theorem}\label{App:submodel}

\begin{algorithm}
\caption{The merge-decomposition MCF algorithm.}
\label{algorithm:main}
\DontPrintSemicolon
\SetAlgoLined
\SetKwComment{Comment}{$\triangleright$\ }{}
\SetKwFunction{Config}{\textnormal{\textsc{config}}}
\SetKwProg{Fn}{Function}{:}{}
\Fn{\Config{$A$, $u$}}{
    $H^\prime \gets$ the current OCS group size according to $v$\;
    \If{$H^\prime = 1$}{
        \Return $A$\;
    }
    Choose any non-empty bipartition $\{K_1, K_2\}$ of $\{1,2,\dots, \}$. WLOG, assume that $K_1=\{1,2,\dots,k^*\}$, $K_2=\{k^*+1,k^*+2,\dots\}$\; \label{line:bipartate}
    $u^\prime_{ijr} \gets \sum_{k\in K_{r}} u_{ijr}, \forall_{i \in P, j \in P, r \in \{1,2\}}$ \\ \Comment*[r]{The merging step.}
    $x \gets $ the optimal solution of $\text{CONFIG}(A, u^\prime)$ \\ \Comment*[r]{Using the algorithm in \S\ref{app:proof_ocs_optim}.}  \label{line:solve}
    $u^{(1)}_{ijk} \gets u_{ijk}, \forall_{i \in P, j \in P, k \in K_1}$\;
    $u^{(2)}_{i,j,(k-k^*)} \gets u_{ijk}, \forall_{i \in P, j \in P, k \in K_2}$\;
    $A^{(r)}_{ij} \gets \sum_{k\in K_{r}} x_{ijk}, \forall_{i \in P, j \in P, r \in \{1,2\}}$\;
    $x^{(r)} \gets \textsc{config}(A^{(r)}, u^{(r)}), \forall_{r\in\{1,2\}}$ \\ \Comment*[r]{The decomposition step.}  \label{line:recursion}
    $x_{ijk} \gets x^{(1)}_{ijk}, \forall_{i \in P, j \in P, k \in K_1}$\;
    $x_{ijk} \gets x^{(2)}_{i,j,(k-k^*)}, \forall_{i \in P, j \in P, k \in K_2}$\;
    \Return $x$\;
}
\end{algorithm}
We formulate an ILP model to describe the OCS reconfiguration under L2-compatibility constraint for each subproblem:

\noindent\textbf{Parameters}:
\begin{itemize}
    \item $P$: the number of EGroups in a OCS-based GPU cluster.
    \item $G_{i,k}$: the number of ports connect to the $k$-th OCS and the $i$-th EGroup.
    \item $A_{ij}$: The \textbf{sub-logical topology} which means the number of connections between the $i$-th EGroup and the $j$-th EGroup.
    \item $u_{ijk}$:
		\textbf{integer variables} representing \textbf{current OCS configuration}, which show the number of links used in the $k$-th OCS to connect the $i$-th EGroup and the $j$-th EGroup in the \textbf{sub-physical topology}.
\end{itemize}
\noindent\textbf{Decision Variables}:
\begin{itemize}
    \item $x_{ijk}$:
         \textbf{integer variables} representing \textbf{new OCS configuration}, which show the number of links to be used in the $k$-th OCS to connect the $i$-th EGroup and the $j$-th EGroup in the \textbf{sub-physical topology}.
\end{itemize}
\textbf{Constraints}:
\begin{equation}
    \sum_k x_{ijk} = C_{ij} \label{app_eq1}
\end{equation}

\begin{equation}
    \sum_{j,k} x_{ijk} \leq \sum_k G_{i,k} \label{app_eq2}
\end{equation}
\begin{equation}
    \sum_{i,k} x_{ijk} \leq \sum_k G_{j,k} \label{app_eq3}
\end{equation}

\begin{equation}
    \sum_{j} x_{ijk} \leq  G_{i,k} \label{app_eq4}
\end{equation}
\begin{equation}
    \sum_{i} x_{ijk} \leq  G_{j,k} \label{app_eq5}
\end{equation}

Constraints \eqref{app_eq2} and \eqref{app_eq3} are in fact redundant and can be directly derived from equations \eqref{app_eq4} and \eqref{app_eq5}.

\noindent\textbf{Min-Rewiring Object}:
\begin{equation}
    \text{Minimize} \sum_{i,j,k}|x_{ijk}-u_{ijk}| \label{app_eq6}
\end{equation}

\subsection{Proof of Merge-Decomposition MCF optimality (Th. \ref{th:MDMCF_optimal})}\label{app:proof_ocs_optim} 

\begin{proof}
    We define $ G_{i,k} $ as the number of links connecting the $i$-th EGroup to the $ k $-th OCS.
When $ N_{\text{ocs}=2} $, the relationship $ x_{i,j,1} + x_{i,j,2} = A_{i,j} $ holds for all $ i,j=1,...,P $. This leads to the formulation of an ILP models as follows:

\begin{subequations}
\begingroup
\allowdisplaybreaks
\begin{align}
\min_{x_{ij1}} &\sum_{i,j}[(u_{ij1} - x_{ij1})^+ + ( u_{ij2} +  x_{ij1}-A_{ij})^+] \notag \\
\text{s.t. } & \sum_{i} x_{ij1} \leq G_{j,1} \land \sum_{j} x_{ij1} \leq G_{i,1}, \label{constr:two 3} \\
& \sum_{i} (A_{ij}-x_{ij1}) \leq G_{j,2} \land \sum_{j} (A_{ij}-x_{ij1}) \leq G_{i,2},  \label{constr:two 5} \\
& x_{ij1} \leq A_{ij} \label{constr:two 1} 
\end{align}
\endgroup
\end{subequations}
The above problem is equivalent to the following MCF problem: There are $P$ supply nodes $\{s_1, s_2,\dots, s_P\}$ and $P$ demand nodes $\{d_1, d_2, \dots, d_P\}$. The supply node $s_i$ has $\left[\sum_j A_{ij} - G_{i2}, G_{i,1}\right]$ units of supply, and the demand node $d_j$ has $[\sum_i A_{ij}-G_{j2}, G_{j1}]$ units of demand. This setting models the constraints \eqref{constr:two 3} and \eqref{constr:two 5}. For each pair of $(s_i, d_j)$, consider the function
$$
f_{ij}(x) = (u_{ij1} - x)^+ + (u_{ij2} - A_{ij} + x)^+, \quad x \in [0, A_{ij}].
$$
This function is convex piecewise-linear. Assume it has $ q $ non-differentiable points $ \{x_1, x_2, \dots, x_q\} $, and define $ x_0 = 0 $ and $ x_{q+1} = A_{ij} $. Suppose that on the interval $[x_{r-1}, x_r]$, the slope of $ f_{ij}(\cdot) $ is $ \gamma_r $. Then, we introduce $ q+1 $ arcs from $ s_i $ to $ d_j $. For the $ r $-th arc, the cost is $ \gamma_r $ and the capacity is $ x_r - x_{r-1} $. This constructs the objective function and constraint \eqref{constr:two 1}. Therefore, when $ N_{\text{ocs}} = 2 $, the OCS reconfiguration problem can be optimally solved in polynomial time by reducing it to an equivalent MCF model.
\end{proof}

\subsection{Proof of Merge-Decomposition Feasibility}\label{app:proof_ocs_geq_2}

Let $OPT(A,u)$ denote the OCS Reconfiguration problem formalized in Section \ref{sec:ocs_reconfiguration}, and $Config(A,u)$ represent the decomposition algorithm described in Algorithm \ref{algorithm:main}. To establish the feasibility of the merge-decomposition principle, we demonstrate that solutions from decomposed subproblems $OPT(A^1,u^1)$ and $OPT(A^2,u^2)$ combine to form a feasible solution for the original problem.

Define the merged solution as:
\begin{equation}
x_{ij1} = \frac{|K_2|}{|K_1 \cup K_2|} \cdot A_{ij} \label{app:proofminiequ1}
\end{equation}

\textbf{Constraint (\ref{constr:two 5}) Verification:}
\begin{align*}
\sum_{j} x_{ij1} &= \sum_{j} \frac{|K_2|}{|K_1 \cup K_2|} A_{ij} \\
&\leq \min\left(\frac{|K_2|}{|K_1 \cup K_2|} \sum_{\kappa \in K_1 \cup K_2} G_{\kappa}(E_i), \sum_{j} A_{ij}\right) \quad  \\
&= \min\left(\sum_{\kappa \in K_2} G_{\kappa}(E_i), \sum_{j} A_{ij}\right) \\
&= \min\left(G_1(E_i), \sum_{j} A_{ij}\right)
\end{align*}

\textbf{Constraint (\ref{constr:two 3}) Verification:}
\begin{align*}
\sum_{i} x_{ij1} &= \sum_{i} \frac{|K_2|}{|K_1 \cup K_2|} A_{ij} \\
&\leq \min\left(\frac{|K_2|}{|K_1 \cup K_2|} \sum_{\kappa \in K_1 \cup K_2} G_{\kappa}(E_j), \sum_{i} A_{ij}\right) \quad  \\
&= \min\left(\sum_{\kappa \in K_2} G_{\kappa}(E_j), \sum_{i} A_{ij}\right) \\
&= \min\left(G_1(E_j), \sum_{i} A_{ij}\right)
\end{align*}

\textbf{Constraint (\ref{constr:two 1}) Verification:}
\begin{align*}
x_{ij1} &= \frac{|K_2|}{|K_1 \cup K_2|} A_{ij} \\
&\leq A_{ij} \quad \text{(Since } 0 < \frac{|K_2|}{|K_1 \cup K_2|} \leq 1\text{)}
\end{align*}

The constructed solution $x_{ij1}$ satisfies all constraints in real domain. By the fundamental property of MCF, the existence of a real-valued feasible solution guarantees the existence of an integer-valued feasible solution through Merge-Decomposition principle.

\section{Supplementary Testbed Result}\label{App:testbed}
\subsection{Architecture of testbed}

\begin{figure*}[tbp]
    \centering
    \includegraphics[width=0.8\linewidth]{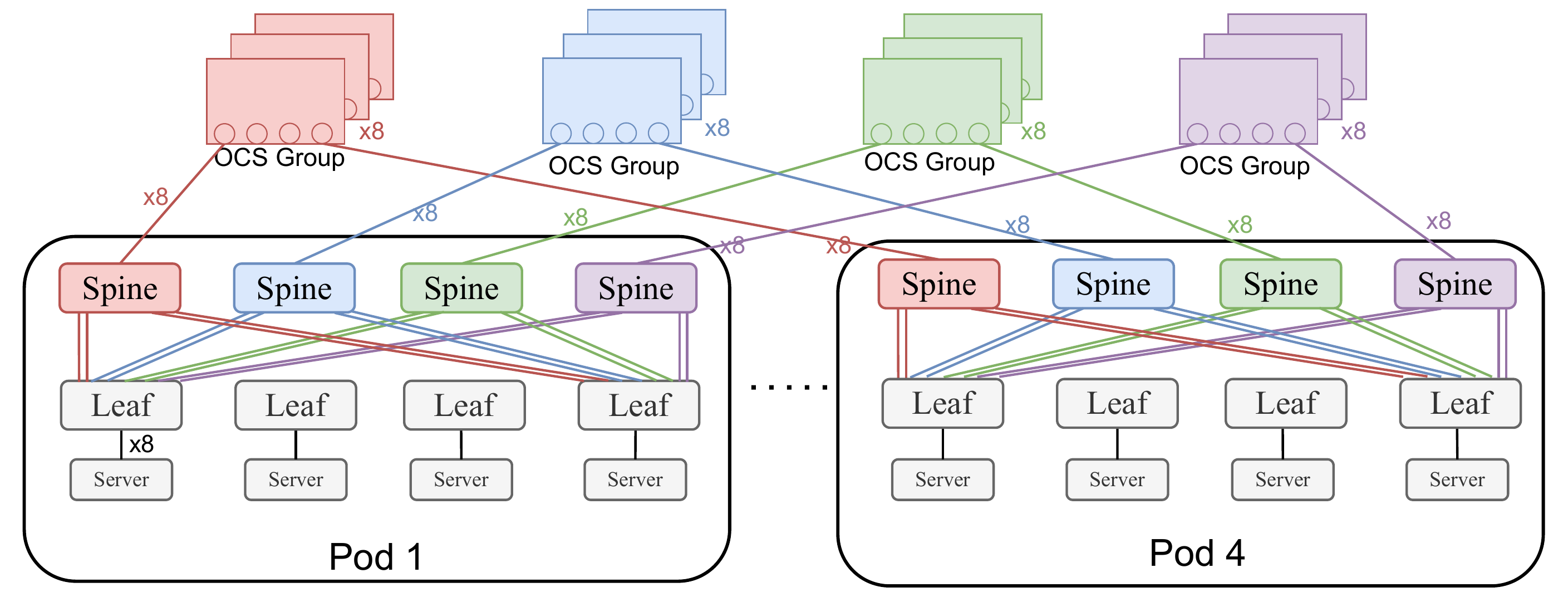}

    \textit{We use VRF technology to create such a logical architecture over the physical architecture due to the limit cluster size.}
    
    \caption{Logical Architecture of Cross Wiring in Testbed. }\label{fig:logical_arch}
\end{figure*}

In this section we share our architecture in the testbed experiment. The logical architecture is illustrated in Fig.~\ref{fig:logical_arch}, where we use VRF to virtualize each switch into two switches, constructing a Cross Wiring logical architecture. The cluster contains 4 Pods, each Pod consists of 4 leaf switches, 4 spine switches and 4 servers, in total 32 NPUs. There are 4 groups of OCS, each group has 8 OCS with 4 ports. Each spine switch connects to one group of OCS.

\subsection{Testbed Result}
This section introduces some supplemented test-bed results. Fig.~\ref{reconfiguration_cost} shows that OCS reconfiguration only affect several steps, build the whole training process
usually costs more than one hour, so infrequently OCS reconfiguration may not greatly affect the training throughput. Fig.~\ref{fig:multitask_large} show the finish
time of each tasks in Cross Wiring and Uniform Wiring, results show that without Cross Wiring, the existence of the incompatible logical topologies may greatly decrease the average training throughput.

\begin{figure}[htbp]
    \centering
     \includegraphics[width=0.8\linewidth]{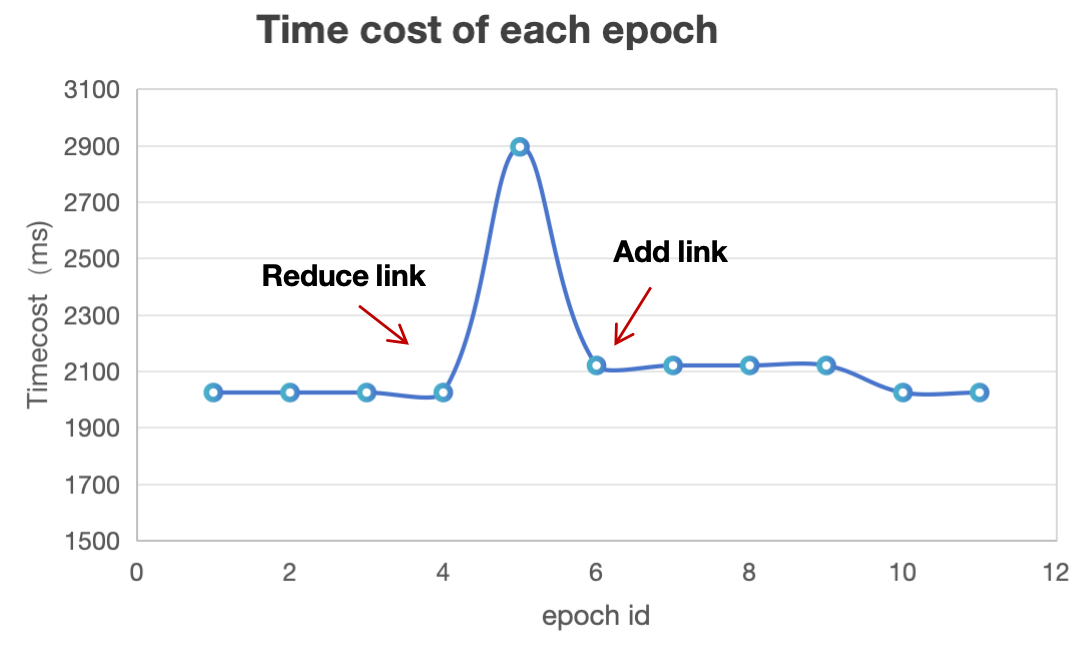}
   \caption{Impact of Reconfiguration on Training Throughput.}\label{reconfiguration_cost} 

\end{figure}

\begin{figure}[!htbp]
    \begin{subfigure}[b]{\linewidth}
        \centering
        \includegraphics[width=\linewidth]{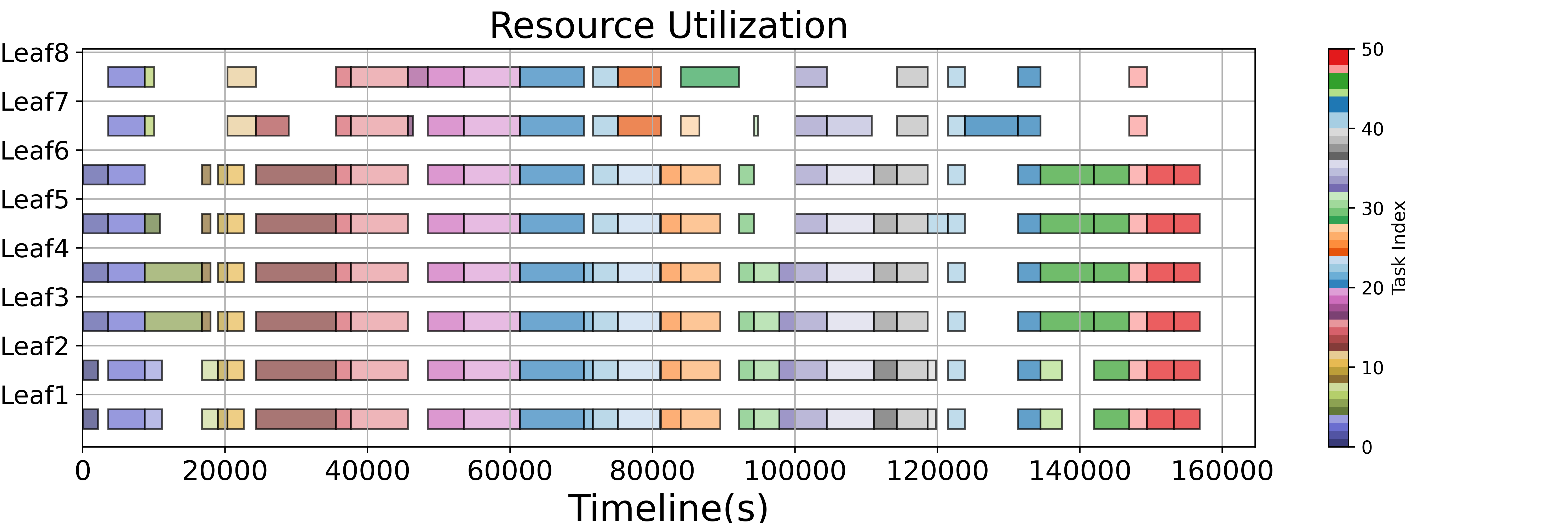}
        \caption{Cross Wiring}\label{fig:multitask1}
    \end{subfigure}
    
    \begin{subfigure}[b]{\linewidth}
        \centering
        \includegraphics[width=\linewidth]{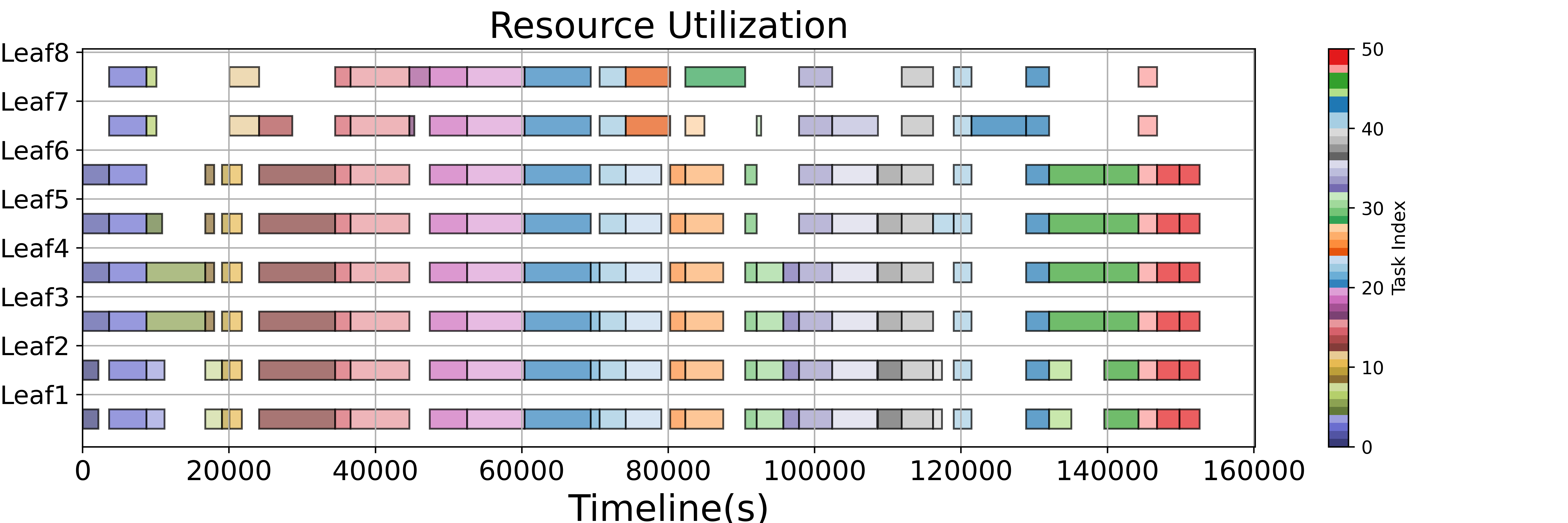}
        \caption{Uniform Wiring}\label{fig:multitask2}
    \end{subfigure}

    \caption{Spacial and Temporal Distribution of Tasks}\label{fig:multitask_large}
\end{figure}

We assess the impact of OCS reconfiguration on training throughput by periodically switching half of the OCS links at regular intervals. As shown in Table~\ref{tbl:reconfig_freq} in Appendix, if OCS reconfiguration is not performed frequently, the fluctuations in task training throughput remain minimal. These findings indicate that \emph{Cross Wiring} maintains high fault tolerance and stable performance, ensuring reliable operation under both normal and failure conditions.

\begin{table}[htbp]
	\centering \small
 \setlength{\tabcolsep}{1.0mm}
 
	\begin{tabular}{c|ccccc}
	    \toprule  
	    Reconfiguration Interval (s)&30&60&90&$\infty$\\
	    \midrule

		Avg. overhead per step (ms)& 1175.4 & 1112.8 & 1103.2&1103.0\\

		\bottomrule 
	\end{tabular}
    
    \textit{$\infty$ means no reconfiguration}
	\caption{Training throughput under different reconfiguration frequencies for a llama2(7B) task}\label{tbl:reconfig_freq}
\end{table}

We quantify BGP convergence latency, revealing scalability challenges as cluster size increases because large-scale deployments exhibit longer latency for BGP convergence. To mitigate this, we can configure Access Control Lists (ACL) or calculate the static routing path to plan routes and reduce network contention prior to BGP convergence. This proactive approach helps minimize the effect of OCS reconfiguration. 2) \emph{Leaf-spine} serves as an optimal baseline by eliminating the hash collisions using routing planning. However, in traditional 3-tier Clos networks, the inherent hash polarization~\cite{qian2024alibaba} can lead to traffic concentration and create communication bottlenecks.

\subsection{Impact of OCS reconfiguration on BGP Convergence Overhead}\label{testbed:bgp}
BGP is commonly used to generate routes in industry, hence the impact of OCS reconfiguration on BGP convergence must also be considered. We aim to analyze the convergence speed of BGP in Cross Wiring from two perspectives: the variation in BGP convergence speed under different cluster sizes, and the variation in BGP convergence speed under different rewiring ratios of links. For the first scenario, we utilized the OCS to construct 2 pods, 3 pods or 4 pods inter-connective cluster, and modified 2 connections of the first spine. For the second scenario, we fixed the cluster as 4 Pods, and modified 12.5\% of total connections per spine, 25\% or 50\% of the first spine. The results are illustrated in Table~\ref{tbl:bgp_conv}. The convergence time is similar across different network scales, as the introduction of OCS results in each port of the spine switches having only one communication endpoint, rather than an All-to-All connection. This fundamental characteristic remains unchanged regardless of the number of Pods. Therefore, we believe that in larger-scale scenarios, the BGP convergence time will not increase significantly. In addition, the convergence time shows a slight increase as the scale of rewiring grows, but the minimal rewiring rule of Cross Wiring help mitigate this effect.

\begin{table}[htbp]
	\centering
    
	\caption{The Impact of OCS Reconfiguration on BGP convergence time}\label{tbl:bgp_conv}

    (a) Under different scales
    \setlength{\tabcolsep}{1.0mm}
	\begin{tabular}{c|ccccc} 
	    \toprule  
	    Network Scale & 2 pods & 3 pods & 4 pods\\
	    \midrule  
		\emph{Avg. BGP convergence time (s)}& 2.33 & 2.25 & 2.38\\  
		\bottomrule 
	\end{tabular}

    \vspace{1em}
    
    (b) Under different modifications
    
    \begin{tabular}{c|ccccc} 
	    \toprule  
	    Modified connections & 12.5\% & 25\% & 50\% \\
	    \midrule  
		\emph{Avg. BGP convergence time (s)}& 2.19 & 2.38 & 2.6\\
		\bottomrule 
	\end{tabular}
	
\end{table}

\begin{figure}
    \centering
    \includegraphics{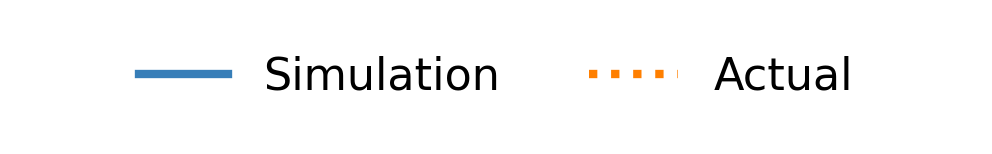}
    
    \begin{subfigure}[t]{1.0\linewidth}
     \centering
     \includegraphics{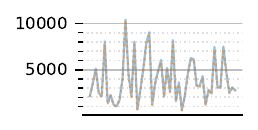}
     \caption{JRT(s) of jobs}
    \end{subfigure}
   \caption{Consistency between simulation and testbed results.}\label{Adjustment} 

\end{figure}

\begin{figure}[ht]
    \centering
    
     \centering
     \includegraphics[width=0.8\linewidth]{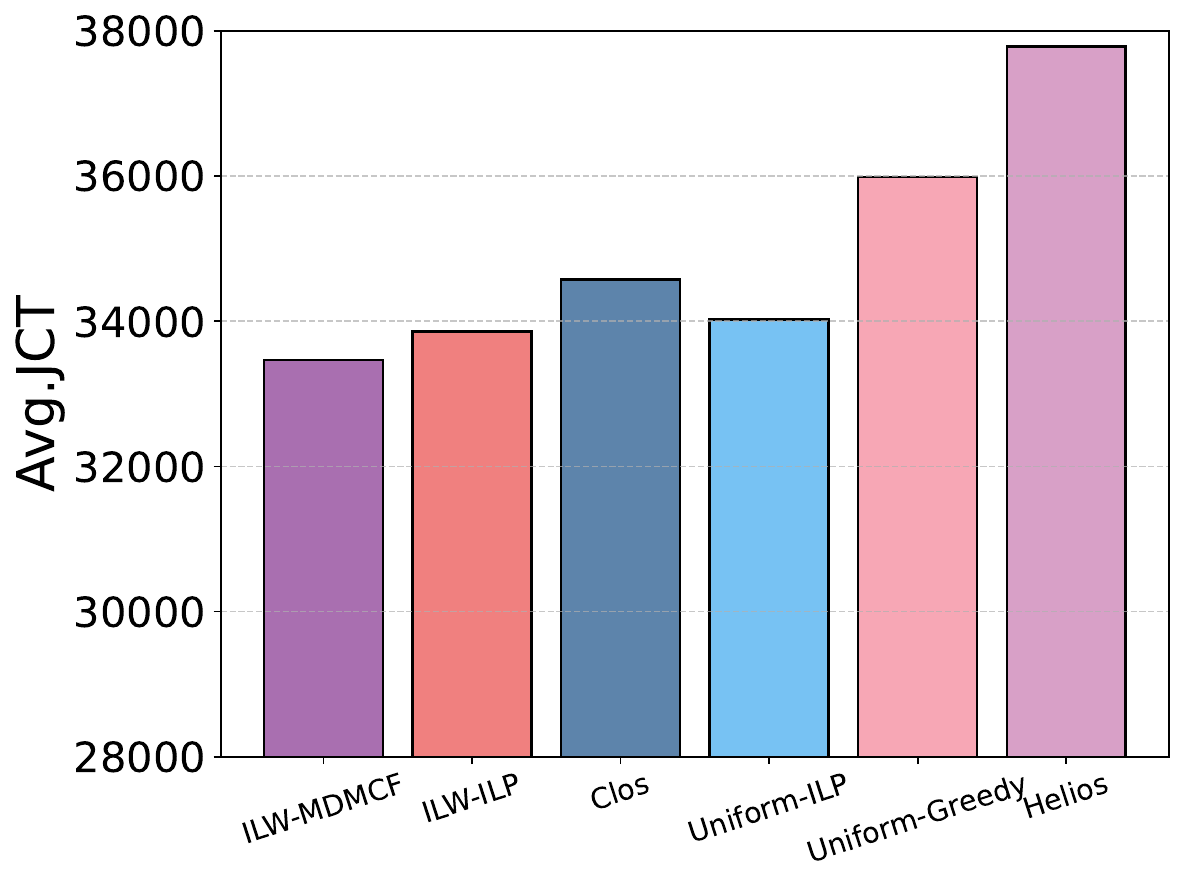}
     \caption{The performance under a scale-up network.}\label{superPod_jct} 

\end{figure}

\section{Supplemented OCS-based GPU cluster Simulation Result}\label{App:sim}

\subsection{Why we design a new Lumos Simulator?}
In the current DCN cluster, MLU is commonly used as a performance evaluation metric for TE (Traffic Engineering). However, the MLU metric is not suitable for ML applications. This is primarily because the traffic generated by ML tasks is characteristically regular and periodic, allowing link reuse at different phases of the same cycle through certain strategies, which is an aspect that MLU fails to capture.

There are many commonly used network simulators, such as NS3 \cite{6663585} or NetBench \cite{10590213}, but these simulators incur significant simulation overhead. Simulating a single task on a small-scale cluster of 64 NPUs takes more than four hours. Moreover, NS3 is difficult to simulate dynamically variable OCS networks, making the evaluation of multi-tenant OCS-based GPU clusters challenging. In this paper, we discuss the impact of physical topology design instead of strategies like routing on applications. Since the traffic generated by AI training is characterized by large, long-lasting flows, we consider designing a lightweight flow-level simulator for dynamically variable topologies, called \emph{Lumos}.

\begin{figure*}[htbp]
    \centering
    \includegraphics[width=0.6\linewidth]{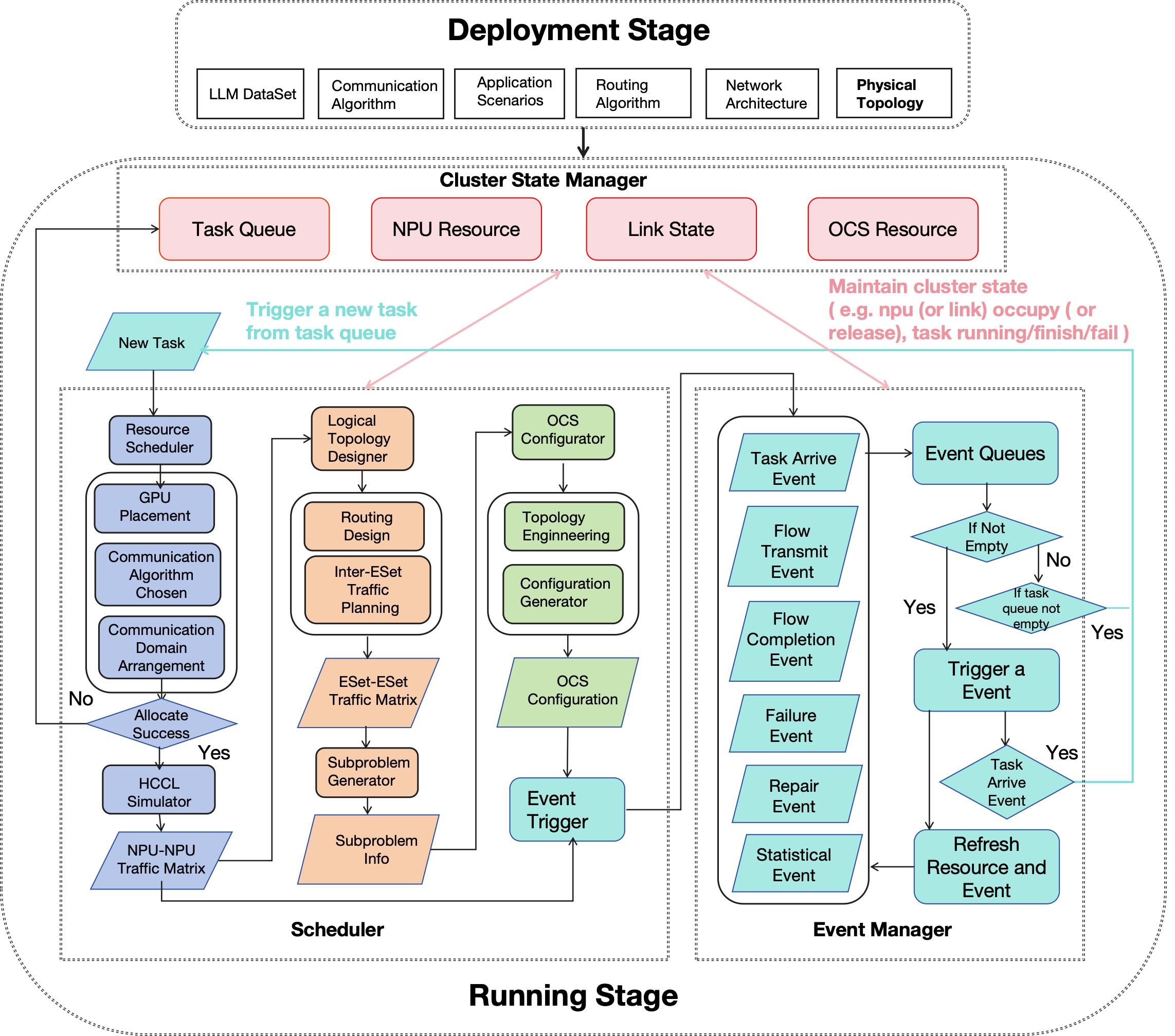}
    \caption{Workflow of Lumos}\label{fig:workflow_sim}
\end{figure*}

In Fig. \ref{fig:workflow_sim}, we present the workflow of the LumosCore simulator, which primarily consists of two stages: the deployment stage and the running stage. During the deployment stage, we configure cluster-related details, including network architecture, network scale, datasets, and physical topology designs. In the running stage, we generate the logical topology based on task information and calculate the OCS configuration through topology engineering. We simulate the training process of ML tasks using an event-driven simulation mechanism. It is noteworthy that public datasets lack details such as task parameters, model types, and communication-to-computation ratios, which we \textbf{supplemented through empirical experiments}. Fig. \ref{Adjustment} shows that after calibration using these empirical experiments, the simulation results closely resemble the measured outcomes.


\end{document}